\long\def\symbolfootnote[#1]#2{\begingroup%
\def\thefootnote{\fnsymbol{footnote}}\footnote[#1]{#2}\endgroup}
\newcommand{\bs}{\boldsymbol}
\DeclareMathOperator*{\argmax}{arg\,max}
\title{Nash Equilibria in Fisher Market}
\author{Bharat Adsul\inst{1} \and Ch. Sobhan Babu\inst{2} \and Jugal Garg\inst{1} \and Ruta Mehta\inst{1} \and\\ Milind Sohoni\inst{1}}
\institute{Indian Institute of Technology, Bombay\\ \email{adsul,jugal,ruta,sohoni@cse.iitb.ac.in}\and Indian Institute of
Technology, Hyderabad\\ \email{sobhan@iith.ac.in}}
\begin{document}
\maketitle

\begin{abstract}
Much work has been done on the computation of market equilibria. However due to strategic play by buyers, it is not clear
whether these are actually observed in the market. Motivated by the observation that a buyer may derive a better payoff by
feigning a different utility function and thereby manipulating the Fisher market equilibrium, we formulate the {\em Fisher
market game} in which buyers strategize by posing different utility functions.  We show that existence of a {\em
conflict-free allocation} is a necessary condition for the Nash equilibria (NE) and also sufficient for the symmetric NE in
this game.
There are many NE with very different payoffs, and the Fisher equilibrium payoff is captured at a symmetric NE.  We provide a
complete polyhedral characterization of all the NE for the two-buyer market game. Surprisingly, all the NE of this game turn
out to be symmetric and the corresponding payoffs constitute a piecewise linear concave curve.  We also study the correlated
equilibria of this game and show that third-party mediation does not help to achieve a better payoff than NE payoffs.  
\end{abstract}

\section{Introduction\label{intro}}
A fundamental market model was proposed by Walras in 1874 \cite{walras}. Independently, Fisher proposed a special
case of this model in 1891 \cite{fisher}, where a market comprises of a set of buyers and divisible goods. The money
possessed by buyers and the amount of each good is specified. The utility function of every buyer is also given.  The market
equilibrium problem is to compute prices and allocation such that every buyer gets the highest utility bundle subject to her
budget constraint and that the market clears. Recently, much work has been done on the computation of market equilibrium
prices and allocation for various utility functions, for example \cite{code,dev,jain,orlin}.

The payoff ({\it i.e.}, happiness) of a buyer depends on the equilibrium allocation and in turn on the utility functions and
initial endowments of the buyers. A natural question to ask is, can a buyer achieve a better payoff by feigning a different
utility function? It turns out that a buyer may indeed gain by feigning! This observation motivates us to analyze the
strategic behavior of buyers in the Fisher market. 
We analyze here the linear utility case described below.

Let $\mathcal{B}$ be the set of buyers, and $\mathcal{G}$ be the set of goods, and $|\mathcal{B}|=m,|\mathcal{G}|=n$. Let
$m_i$ be the money possessed by buyer $i$, and $q_j$ be the total quantity of good $j$ in the market. The utility function of
buyer $i$ is represented by the non-negative utility tuple $\langle u_{i1}, \dots,u_{in}\rangle$, where $u_{ij}$ is
the payoff, she derives from a unit amount of good $j$.  Thus, if $x_{ij}$ is the amount of good $j$ allocated to buyer $i$,
then the {\em payoff} she derives from her allocation is $\sum_{j\in\mathcal G}u_{ij}x_{ij}$. Market equilibrium or market
clearing prices $\langle p_1,\dots,p_n\rangle$, where $p_j$ is the price of good $j$, and equilibrium allocation
$[x_{ij}]_{i\in\mathcal B,j\in\mathcal G}$ satisfy the following constraints:

\begin{itemize}
\item {\bf Market Clearing:} The demand equals the supply of each good, {\it i.e.}, $\forall j\in\mathcal G,\
\sum_{i\in\mathcal B}x_{ij}=q_j$, and $\forall i \in \mathcal{B}$, $\sum_{j\in\mathcal G}p_{j}x_{ij}= m_i$. 
\item {\bf Optimal Goods:} Every buyer buys only those goods, which give her the maximum utility per unit of money, {\it
i.e.}, if $x_{ij}>0$ then $\frac{u_{ij}}{p_j}=\ \max_{k\in\mathcal G}\frac{u_{ik}}{p_k}$.
\end{itemize}

In this market model, 
by scaling $u_{ij}$'s appropriately, we may assume that the quantity of every good is one unit, {\it i.e.}, $q_j=1,\ \forall
j\in\mathcal G$. Equilibrium prices are unique and the set of equilibrium allocations is a convex set \cite{agt}.
The following example illustrates a small market.

\begin{example}\label{eg1}
Consider a 2 buyers, 2 goods market with $m_1=m_2=10$, $q_1=q_2=1$, $\langle u_{11}, u_{12}\rangle =\langle 10, 3\rangle$ and
$\langle u_{21},u_{22}\rangle=\langle 3,10 \rangle$. The equilibrium prices of this market are $\langle
p_1,p_2\rangle=\langle 10,10\rangle$ and the unique equilibrium allocation is $\langle
x_{11},x_{12},x_{21},x_{22}\rangle=\langle 1,0,0,1\rangle$. The payoff of both the buyers is $10$.
\end{example}

In the above market, does a buyer have a strategy to achieve a better payoff? Yes indeed, buyer $1$ can force price change by
posing a different utility tuple, and in turn gain. Suppose buyer $1$ feigns her utility tuple as $\langle 5,15 \rangle$
instead of $\langle 10,3\rangle$, then coincidentally, the equilibrium prices $\langle p_1,p_2\rangle$ are also $\langle
5,15\rangle$. The unique equilibrium allocation $\langle x_{11}, x_{12},x_{21},x_{22}\rangle$ is $\langle
1,\frac{1}{3},0,\frac{2}{3}\rangle$. Now, the payoff of buyer $1$ is $u_{11}*1+u_{12}*\frac{1}{3}=11$, and that of buyer $2$
is $u_{22}*\frac{2}{3}=\frac{20}{3}$. Note that the payoffs are still calculated w.r.t. the true utility tuples.

This clearly shows that a buyer could gain by feigning a different utility tuple, hence the Fisher market is
susceptible to gaming by strategic buyers. Therefore, the equilibrium prices w.r.t. the true utility tuples may
not be the actual operating point of the market. The natural questions to investigate are: What are the possible operating
points of this market model under strategic behavior? Can they be computed? Is there a preferred one? This motivates us
to study the Nash equilibria of the {\em Fisher market game}, where buyers are the players and strategies are the utility
tuples that they may pose. \\
\\
\noindent {\bf Related work.} Shapley and Shubik \cite{ssg} consider a market game for the exchange economy, where every good has a
trading post, and the strategy of a buyer is to bid (money) at each trading post. For each strategy profile, the prices are
determined naturally so that market clears and goods are allocated accordingly, however agents may not get the optimal
bundles. Many variants \cite{amir,dubey} of this game have been extensively studied. Essentially, the goal is to design a
mechanism to implement Walrasian equilibrium (WE), {\it i.e.}, to capture WE at a NE of the game. The strategy space of this
game is tied to the implementation of the market (in this case, trading posts). Our strategy space is the utility tuple
itself, and is independent of the market implementation. It is not clear that bids of a buyer in the Shapley-Shubik game
correspond to the feigned utility tuples. 

In word auction markets as well, a similar study on strategic behavior of buyers (advertisers) has been done
\cite{deng,edelman,varian}.\\

\noindent{\bf Our contributions.} We formulate the Fisher market game, the strategy sets and the corresponding payoff function in
Section \ref{FMG}. Every (pure) strategy profile defines a Fisher market, and therefore market equilibrium prices and a set
of equilibrium allocations.  The payoff of a buyer may not be same across all equilibrium allocations w.r.t. a strategy
profile, as illustrated by Example \ref{eg2} in Section \ref{FMG}.  Furthermore, there may not exist an equilibrium
allocation, which gives the maximum possible payoffs to all the buyers. This behavior causes a conflict of interest among
buyers. A strategy profile is said to be {\em conflict-free}, if there is an equilibrium allocation which gives the maximum
possible payoffs to all the buyers. 

A strategy profile is called a Nash equilibrium strategy profile {\em (NESP)}, if no buyer can unilaterally deviate and get
a better payoff. In Section \ref{gen}, we show that all NESPs are conflict-free. Using the equilibrium prices, we associate a
bipartite graph to a strategy profile and show that this graph must satisfy certain conditions when the corresponding
strategy profile is a NESP. 

Next, we define {\em symmetric} strategy profiles, where all buyers play the same strategy. We show that a symmetric strategy
profile is a NESP iff it is conflict-free. It is interesting to note that a symmetric NESP can be constructed for a given
market game, whose payoff is the same as the Fisher payoff, {\it i.e.}, payoff when all buyers play truthfully. 
Example \ref{eg5} shows that all NESPs need not be symmetric and the payoff w.r.t. a NESP need not be Pareto optimal ({\it
i.e.}, efficient). However, the Fisher payoff is always Pareto optimal (see First Theorem of Welfare Economics \cite{micro}).

Characterization of all the NESPs seems difficult; even for markets with only three buyers. We study two-buyer markets in
Section \ref{NE} and the main results are:
\begin{itemize}
\item All NESPs are symmetric and they are a union of at most $2n$ convex sets.
\item The set of NESP payoffs constitute a piecewise linear concave curve and all these payoffs are Pareto optimal.
The strategizing on utilities has the same effect as differing initial endowments (see Second Theorem of
Welfare Economics \cite{micro}).
\item The third-party mediation does not help in this game.
\end{itemize} 
Some interesting observations about two-buyer markets are:
\begin{itemize}
\item The buyer $i$ gets the maximum payoff among all Nash equilibrium payoffs when she imitates the other, {\it i.e.}, when
they play $(\bs{u_{-i}},\bs{u_{-i}})$, where $\bs{u_{-i}}$ is the true utility tuple of the other buyer. 
\item There may exist NESPs, whose social welfare ({\it i.e.}, sum of the payoffs of both the buyers) is larger than that of
the Fisher payoff (Example \ref{eg6}).
\item For a particular payoff tuple, there is a convex set of NESPs and hence convex set of equilibrium prices. This
motivates a seller to offer incentives to the buyers to choose a particular NESP from this convex set, which fetches the
maximum price for her good. Example \ref{eg7} illustrates this behavior. 
\end{itemize} 

Most qualitative features of these markets may carry over to oligopolies, which arise in numerous
scenarios. For example, relationship between a few manufacturers of aircrafts or automobiles and many suppliers. 
Finally, we conclude in Section \ref{con_fut} that it is highly unlikely that buyers will act according to their true utility
tuples in Fisher markets and discuss some directions for further research.

\section{The Fisher Market Game}\label{FMG}
As defined in the previous section, a linear Fisher market is defined by the tuple $(\mathcal B, \mathcal G,
(\bs{u_i})_{i\in\mathcal B}$, $\bs{m})$, where $\mathcal B$ is a set of buyers, $\mathcal G$ is a set of goods,
$\bs{u_i}=(u_{ij})_{j\in\mathcal G}$ is the true utility tuple of buyer $i$, and $\bs{m}=(m_i)_{i\in\mathcal B}$ is the
endowment vector. We assume that $|\mathcal B|=m, |\mathcal G|=n$ and the quantity of every good is one unit.

The {\em Fisher market game} is a one-shot non-cooperative game, where the buyers are the players, and the {\it strategy set}
is all possible utility tuples that they may pose, {\it i.e.}, $\mathbb S_i= \{\langle s_{i1},s_{i2},\dots,s_{in}\rangle ~ | ~
s_{ij}\ge 0,\ \sum_{j\in\mathcal G}s_{ij}\ne 0 \},\ \forall i\in\mathcal B$. 
Clearly, the set of all strategy profiles is $\mathbb S=\mathbb S_1\times\dots\times \mathbb S_m$. When a strategy profile
$S=(\bs{s_1},\dots,\bs{s_m})$ is played, where $\bs{s_i}\in\mathbb S_i$, we treat $\bs{s_1},\dots,\bs{s_m}$ as utility tuples
of buyers $1,\dots,m$ respectively, and compute the equilibrium prices and a set of equilibrium allocations w.r.t. $S$ and
$\bs{m}$. 

Further, using the equilibrium prices $(p_1,\dots,p_n)$, we generate the corresponding {\em solution graph} $G$ as follows: Let
$V(G)=\mathcal B\cup\mathcal G$. Let $b_i$ be the node corresponding to the buyer $i,\ \forall i\in\mathcal B$ and $g_j$ be
the node corresponding to the good $j,\ \forall j\in\mathcal G$ in $G$. We place an edge between $b_i$ and $g_j$ iff
$\frac{s_{ij}}{p_j}=\max_{k\in\mathcal G}\frac{s_{ik}}{p_k}$, and call the edges of the solution graph as {\em tight edges}.
Note that when the solution graph is a forest, there is exactly one equilibrium allocation, however this is not so, when it
contains cycles. In the standard Fisher market ({\it i.e.}, strategy of every buyer is her true utility tuple), all
equilibrium allocations give the same payoff to a buyer. However, this is not so when buyers strategize on their utility
tuples: Different equilibrium allocations may not give the same payoff to a buyer. The following example illustrates this
scenario.

\begin{example}\label{eg2}
Consider the Fisher market of Example \ref{eg1}. Consider the strategy profile $S=(\langle 1,19\rangle ,\langle
1,19\rangle)$. Then, the equilibrium prices $\langle p_1,p_2\rangle$ are $\langle 1,19\rangle$ and the solution graph is a
cycle. There are many equilibrium allocations and the allocations $[x_{11},x_{12},x_{13},x_{14}]$ achieving the highest
payoff for buyers $1$ and $2$ are $[1,\frac{9}{19},0,\frac{10}{19}]$ and $[0,\frac{10}{19},1,\frac{9}{19}]$ respectively. The
payoffs corresponding to these allocations are $(11.42,5.26)$ and $(1.58,7.74)$ respectively. Note that there is no
allocation, which gives the maximum possible payoff to both the buyers. 
\end{example}

Let $\bs{p}(S)=(p_1,\dots,p_n)$ be the equilibrium prices, $G(S)$ be the solution graph, and $\mathbb X(S)$ be the set of
equilibrium allocations w.r.t. a strategy profile $S$. The payoff w.r.t. $X\in\mathbb X(S)$ is defined as
$(u_1(X),\dots,u_m(X))$, where $u_i(X)=\sum_{j\in{\mathcal G}}u_{ij}x_{ij}$. Let $w_i(S)=\max_{X\in\mathbb
X(S)}u_i(X),\forall i\in\mathcal B$. 

\begin{definition}
A strategy profile $S$ is said to be {\bf \em conflict-free} if $\exists X\in\mathbb X(S)$, s.t. $u_i(X)=w_i(S),\ \forall
i\in\mathcal B$. Such an $X$ is called a {\bf \em conflict-free allocation}.
\end{definition} 

When a strategy profile $S=(\bs{s_1},\dots,\bs{s_m})$ is not
conflict-free, there is a conflict of interest in selecting a particular allocation for the play. If a buyer, say $k$, does
not get the same payoff from all the equilibrium allocations, {\it i.e.}, $\exists X\in\mathbb X(S),\ u_k(X)<w_k(S)$, then we
show that for every $\delta>0$, there exists a strategy profile $S'=(\bs{s'_1},\dots,\bs{s'_m})$, where
$\bs{s'_i}=\bs{s_i},\ \forall i\ne k$, such that $u_k(X')> w_k(S)-\delta, \forall X'\in\mathbb X(S')$ (Section \ref{crp}). The
following example illustrates the same.

\begin{example}\label{eg3}
In Example \ref{eg2}, for $\delta=0.1$, consider $S'=(\langle 1.1,18.9 \rangle$, $\langle1,19\rangle)$,
{\it i.e.}, buyer $1$ deviates slightly from $S$. Then, $\bs{p}(S')=\langle 1.1,18.9\rangle$, and $G(S')$ is a tree; the
cycle of Example \ref{eg2} is broken. Hence there is a unique equilibrium allocation, and $w_1(S')=11.41,\ w_2(S')=5.29$. 
\end{example}

Therefore, if a strategy profile $S$ is not conflict-free, then for every choice of allocation $X\in\mathbb X(S)$ to decide
the payoff, there is a buyer who may deviate and assure herself a better payoff. In other words, when $S$ is not
conflict-free, there is no way to choose an allocation $X$ from $\mathbb X(S)$ acceptable to all the buyers. This suggests
that only conflict-free strategies are interesting. Therefore, we may define the payoff function $\mathcal P_i:\mathbb
S\rightarrow \mathbb R$ for each player $i \in \mathcal B$ as follows:

\begin{equation}
\forall S\in\mathbb S,\ \mathcal P_i(S)=u_i(X), \mbox{ where } X=\displaystyle\argmax_{X'\in\mathbb X(S)}\prod_{i\in\mathcal B}u_i(X').
\end{equation}
Note that the payoff functions are well-defined 
and when $S$ is conflict-free, $\mathcal P_i(S)=w_i(S),\  \forall i \in \mathcal B$. 

\section{Nash Equilibria: A Characterization}\label{gen}
In this section, we prove some necessary conditions for a strategy profile to be a NESP of the Fisher market game defined in
the previous section. Nash equilibrium \cite{Nash} is a solution concept for games with two or more rational players. When a
strategy profile is a NESP, no player benefits by changing her strategy unilaterally. 

For technical convenience, we assume that $u_{ij}>0$ and $s_{ij}>0,\ \forall i\in\mathcal B, \forall j\in\mathcal G$.
The boundary cases may be easily handled separately. Note that if $S=(\bs{s_1},\dots,\bs{s_m})$ is a NESP then
$S'=(\alpha_1\bs{s_1}, \dots, \alpha_m\bs{s_m}$), where $\alpha_1, \dots, \alpha_m>0$, is also a NESP. Therefore, w.l.o.g.
we consider only the normalized strategies $\bs{s_i}=\langle s_{i1},\dots,s_{in}\rangle$, where
$\sum_{j\in\mathcal G} s_{ij}=1\symbolfootnote[1]{For simplicity, we do use non-normalized strategy profiles in the
examples.},\ \forall i\in\mathcal B$. As mentioned in the previous section, the true utility tuple of buyer $i$ is $\langle
u_{i1}, \dots,u_{in}\rangle$. For convenience, we may assume that $\sum_{j\in\mathcal G} u_{ij}=1$ and $\sum_{i\in\mathcal
B}m_i=1$ (w.l.o.g.). 

We show that all NESPs are conflict-free. However, all conflict-free strategies are not NESPs. A symmetric strategy
profile, where all players play the same strategy ({\it i.e.}, $\forall i,j \in \mathcal B,\ {\bs s_i}={\bs s_j}$),
is a NESP iff it is conflict-free. If a strategy profile $S$ is not conflict-free, then there is a buyer $a$ such
that $\mathcal P_a(S)<w_a(S)$. The ConflictRemoval procedure in the next section describes how she may deviate and assure
herself payoff almost equal to $w_a(S)$.
\vspace{-0.3cm}
\subsection{Conflict Removal Procedure}\label{crp}
\begin{definition}
Let $S$ be a strategy profile, $X\in\mathbb X(S)$ be an allocation, and $P=v_1,v_2,v_3,\dots$ be a path in $G(S)$. $P$ is
called an {\bf \em alternating path} w.r.t. $X$, if the allocation on the edges at odd positions is non-zero, {\it i.e.},
$x_{v_{2i-1}v_{2i}}>0,\forall i\ge 1$.
The edges with non-zero allocation are called {\bf \em non-zero edges}.
\end{definition} 

\begin{table}[!hbt]
\vspace{-0.5cm}
\begin{center}
\begin{tabular}{|l|}\hline
{\bf ConflictRemoval}$(S,b_a,\delta)$\\
\hspace{15pt}{\bf while} $b_a$ belongs to a cycle in $G(S)$ {\bf do}\\
\hspace{30pt}$(p_1,\dots,p_n)\leftarrow \bs{p}(S)$;\\
\hspace{30pt}$J\leftarrow \{j\in\mathcal G\ |\ \mbox{the edge } (b_a,g_j)\mbox{ belongs to a cycle in } G(S)\}$; \\
\hspace{30pt}$g_b\leftarrow \displaystyle\argmax_{j\in J}$ $\frac{u_{aj}}{p_j}$;\\
\hspace{30pt}$X\leftarrow$ an allocation in $\mathbb X(S)$ such that $u_a(X)=w_a(S)$ and $x_{ab}$ is maximum;\\
\hspace{30pt}$S\leftarrow$ Perturbation$(S,X,b_a,g_b,\frac{\delta}{n})$;\\
\hspace{15pt}{\bf endwhile}\\
\hspace{15pt}{\bf return} $S$;\\
\\
{\bf Perturbation}($S, X, b_a, g_b, \gamma$)\\
\hspace{15pt}$S'\leftarrow S$;\\
\hspace{15pt}{\bf if} $(b_a,g_b)$ does not belong to a cycle in $G(S)$ {\bf then}\\
\hspace{30pt}{\bf return} $S'$;\\ 
\hspace{15pt}{\bf endif}\\
\hspace{15pt}$J_1\leftarrow \{v\ | \mbox{ there is an alternating path from } b_a \mbox{ to } v \mbox{ in }
G(S)\setminus(b_a,g_b) \mbox{ w.r.t. } X\}$;\\
\hspace{15pt}$J_2\leftarrow \{v\ | \mbox{ there is an alternating path from } g_b \mbox{ to } v \mbox{ in }
G(S)\setminus(b_a,g_b) \mbox{ w.r.t. } X\}$;\\
\hspace{15pt}$(p_1,\dots,p_n)\leftarrow \bs{p}(S)$; $\ \ l\leftarrow \sum_{g_j\in J_1}p_j$; $\ \ r\leftarrow \sum_{g_j\in
J_2}p_j$; \\ 
\hspace{15pt}W.r.t. $\alpha$, define prices of goods to be\\
\hspace{30pt}$\forall g_j\in J_1: (1-\alpha)p_j$; $\ \ \forall g_j\in J_2: (1+\frac{l \alpha}{r})p_j$; $\ \ \forall
g_j\in\mathcal G\setminus(J_1\cup J_2): p_j$;\\ 
\hspace{15pt}Raise $\alpha$ infinitesimally starting from $0$ such that none of the three events occur:\\
\hspace{30pt}Event 1: a new edge becomes tight;\\
\hspace{30pt}Event 2: a non-zero edge becomes zero;\\
\hspace{30pt}Event 3: payoff of buyer $a$ becomes $u_a(X)-\gamma$;\\
\hspace{15pt}$s'_{ab}\leftarrow s_{ab}\frac{(1+\frac{l\alpha}{r})}{(1-\alpha)};\
\bs{s'_a}\leftarrow\frac{\bs{s'_a}}{\sum_{j\in\mathcal G}s'_{aj}}$;\\
\hspace{15pt}{\bf return} $S'$;\\
\hline
\end{tabular}
\end{center}
\caption{Conflict Removal Procedure}\label{a1}
\vspace{-0.6cm}
\end{table}

The ConflictRemoval procedure in Table \ref{a1} takes a strategy profile $S$, a buyer $a$ and a positive number $\delta$, and
outputs another strategy profile $S'$, where $\bs{s'_i}=\bs{s_i},\ \forall i\neq a$ such that $\forall X'\in\mathbb X(S'),\
u_a(X')>w_a(S)-\delta$. The idea is that if a buyer, say $a$, does not belong to any cycle in the solution graph of a
strategy profile $S$, then $u_a(X)=w_a(S),\ \forall X\in\mathbb X(S)$. The procedure essentially breaks all the cycles
containing $b_a$ in $G(S)$ using the Perturbation procedure iteratively such that the payoff of buyer $a$ does not decrease
by more than $\delta$. 

The Perturbation procedure takes a strategy profile $S$, a buyer $a$, a good $b$, an allocation
$X\in\mathbb X(S)$, where $x_{ab}$ is maximum among all allocations in $\mathbb X(S)$ and a positive number $\gamma$, and
outputs another strategy profile $S'$ such that $\bs{s'_i}=\bs{s_i},\ \forall i\neq a$ and $w_a(S')>u_a(X)-\gamma$. It
essentially breaks all the cycles containing the edge $(b_a,g_b)$ in $G(S)$. 

A detailed explanation of both the procedures is given in Appendix \ref{expl}. In the next theorem, we use the ConflictRemoval
procedure to show that all the NESPs in the Fisher market game are conflict-free. 

\begin{theorem}\label{nece}
If $S$ is a NESP, then 
\begin{itemize}
\item[(i)] $\exists X\in\mathbb X(S)$ such that $u_i(X)=w_i(S),\forall i\in\mathcal B$, {\it i.e.}, $S$ is conflict-free. 
\item[(ii)] the degree of every good in $G(S)$ is at least 2.
\item[(iii)] for every buyer $i\in\mathcal B,\ \exists k_i\in K_i$ s.t. $x_{ik_i}>0$, where $K_i=\{j\in\mathcal G\ |\
\frac{u_{ij}}{p_j}=\max_{k\in\mathcal G}\frac{u_{ik}}{p_k}\}$, $(p_1,\dots,p_n)=\bs{p}(S)$ and $[x_{ij}]$ is a conflict-free
allocation. 
\end{itemize} 
\end{theorem}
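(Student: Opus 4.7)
The plan is to prove the three items by contradiction, exhibiting in each case a unilateral deviation that strictly improves the payoff of some buyer, which would contradict the NESP assumption. The fundamental tool for (i) is already available as the ConflictRemoval procedure; for (ii) and (iii) I would construct carefully chosen small perturbations of a single buyer's strategy and track what happens to the equilibrium.

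For (i), suppose $S$ is a NESP but not conflict-free. Then by definition there is some buyer $a$ with $\mathcal P_a(S)<w_a(S)$. Set $\delta=(w_a(S)-\mathcal P_a(S))/2>0$ and run ConflictRemoval$(S,b_a,\delta)$; by the advertised guarantee of the procedure this yields a strategy profile $S'$ agreeing with $S$ outside the $a$-th coordinate such that $u_a(X')>w_a(S)-\delta$ for every $X'\in\mathbb X(S')$. In particular $\mathcal P_a(S')>w_a(S)-\delta>\mathcal P_a(S)$, so $a$ strictly gains by switching to $\bs{s'_a}$, contradicting NESP.

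For (ii), use (i) to fix a conflict-free allocation $X\in\mathbb X(S)$ with prices $(p_1,\ldots,p_n)=\bs{p}(S)$. Suppose some good $j$ has degree at most one in $G(S)$; since all $s_{ij}>0$ forces every good to have positive price and thus some adjacent buyer, the degree is exactly one, and the unique neighbor $b_i$ must purchase all of $j$ to clear the market, so $x_{ij}=1$. I would then consider the deviation $\bs{s'_i}$ obtained by scaling $s_{ij}$ down by a factor $(1-\epsilon)$ and renormalizing. Because $j$ is strictly sub-optimal for every other buyer, the edge $(b_i,g_j)$ is the only one for $g_j$ even under small perturbations, so $b_i$ still absorbs the entire supply of $j$; but the equilibrium price $p'_j$ drops (because $s_{ij}/p_j$ was pinning it from above), freeing a positive amount of budget that buyer $i$ must spend on some other good $k$ tight for her, where $u_{ik}>0$. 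Her payoff therefore strictly increases, contradicting NESP. The main obstacle here is verifying the comparative-statics claim that a small drop in $s_{ij}$ really does lower $p_j$ and reroute budget onto a tight good with positive true utility; this needs a careful perturbation analysis of the equilibrium system, analogous to the Perturbation subroutine already in the paper.

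For (iii), suppose for some buyer $i$ we have $x_{ij}=0$ for all $j\in K_i$. Let $M_i=\max_{k\in\mathcal G}u_{ik}/p_k$, so $u_{ij}/p_j<M_i$ strictly for every $j\notin K_i$. Then
\[
u_i(X)=\sum_{j\notin K_i}\frac{u_{ij}}{p_j}\,(p_j x_{ij})<M_i\sum_{j\notin K_i}p_j x_{ij}\le M_i\,m_i.
\]
Fix any $k\in K_i$ and consider the perturbation $\bs{s'_i}=(1-\epsilon)\bs{s_i}+\epsilon e_k$ for small $\epsilon>0$. By continuity of the equilibrium map in the interior (all utilities strictly positive), the new prices $p'$ and the set of tight edges for buyers $\neq i$ are close to the original, while the strategic ratio $s'_{ik}/p'_k$ strictly exceeds the other ratios of buyer $i$, forcing her to receive a positive mass $x'_{ik}>0$ of good $k$ at an approximately unchanged price $p'_k\approx p_k$. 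Tracking the first-order change shows that the budget she newly diverts to good $k$ gains her true utility at rate $\approx M_i$ per unit money, whereas it previously earned her less than $M_i$ per unit money on the $j\notin K_i$ goods; hence $\mathcal P_i(S')>\mathcal P_i(S)$ for sufficiently small $\epsilon$, contradicting NESP. The main obstacle here, and the most delicate step, is justifying the first-order statement rigorously: one needs continuity of $\bs{p}$ and of a suitably selected equilibrium allocation under the perturbation, plus the transversality statement that buyer $i$ actually begins to receive $k$, not merely that the edge $(b_i,g_k)$ becomes tight. I would handle this by combining a complementary-slackness argument at the original equilibrium with the fact that $k$ belongs to $K_i$ (computed with the \emph{true} utilities) to ensure that strengthening $i$'s bid on $k$ pushes strictly into the interior of the allocation polytope.
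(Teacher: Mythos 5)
Your proposal follows essentially the same route as the paper: part (i) invokes the ConflictRemoval procedure with $0<\delta<w_a(S)-\mathcal P_a(S)$ exactly as in the paper, part (ii) deviates by lowering the bid on the degree-one good so its price drops and the freed budget buys other (positive-utility) goods, and part (iii) deviates by raising the bid on a good in $K_i$. The comparative-statics details you flag as obstacles are also left implicit in the paper's own (terse) proof, so your write-up is at least as complete as the original.
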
 

\begin{proof}
Suppose there does not exist an allocation $X\in\mathbb X(S)$ such that $u_i(X)=w_i(S),\ \forall i\in\mathcal B$, then there
is a buyer $k\in\mathcal B$, such that $\mathcal P_k(S)<w_k(S)$. Clearly, buyer $k$ has a deviating strategy (apply
ConflictRemoval on the input tuple $(S,k,\delta)$, where $0<\delta<(w_k(S)-\mathcal P_k(S))$), which is a contradiction.

For part (ii), if a good $b$ is connected to exactly one buyer, say $a$, in $G(S)$, then buyer $a$ may gain by reducing
$s_{ab}$, so that price of good $b$ decreases and prices of all other goods increase by the same factor.

For part (iii), if there exists a buyer $i$ such that $x_{ik_i}=0,\ \forall k_i\in K_i$, then she may gain by increasing the
utility for a good in $K_i$. \qed 
\end{proof} 

The following example shows that the above conditions are not sufficient. 

\begin{example}\label{eg4}
Consider a market with 3 buyers and 2 goods, where $\bs{m}=\langle 50,100,$ $50\rangle$, $\bs{u_1}=\langle 2,0.1\rangle,
\bs{u_2}=\langle 4,9\rangle$, and $\bs{u_3}=\langle 0.1,2\rangle$. Consider the strategy profile
$S=(\bs{u_1},\bs{u_2},\bs{u_3})$ given by the true utility tuples. The payoff tuple w.r.t. $S$ is $(1.63,6.5,0.72)$. It
satisfies all the necessary conditions in the above theorem, however $S$ is not a NESP because buyer $2$ has a deviating
strategy $\bs{s'_2}=\langle 2,3\rangle$ and the payoff w.r.t. strategy profile $(\bs{s_1},\bs{s_2'},\bs{s_3})$ is
$(1.25,6.75,0.83)$.
\end{example}

\subsection{Symmetric and Asymmetric NESPs}\label{symmAsymm}
Recall that a strategy profile $S=(\bs{s_1},\dots,\bs{s_m})$ is said to be a {\em symmetric} strategy profile if
$\bs{s_1}=\dots= \bs{s_m}$, {\it i.e.}, all buyers play the same strategy. 
\begin{proposition}\label{symm}
A symmetric strategy profile S is a NESP iff it is conflict-free.
\end{proposition}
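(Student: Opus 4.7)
For the forward implication, if $S$ is a NESP then Theorem~\ref{nece}(i) directly gives that $S$ is conflict-free, so there is nothing to do. The reverse direction is the substantive one: assuming the symmetric profile $S=(\bs{s},\ldots,\bs{s})$ is conflict-free, I need to show that no buyer $i$ can strictly improve her payoff by a unilateral deviation.

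The first step is to describe $\mathbb X(S)$ and $w_i(S)$ concretely. At the symmetric profile, the (unique) equilibrium prices are $p_j=s_j$ and every edge of $G(S)$ is tight. Since the buyers $k\neq i$, also playing $\bs{s}$, can absorb any residual supply subject to their $\bs{s}$-budget constraints (their total budget $1-m_i$ exactly matches the $\bs{s}$-value of any leftover supply once buyer $i$'s bundle is fixed), the set of allocations $x_i$ achievable for buyer $i$ in $\mathbb X(S)$ is exactly $\{x_i\in[0,1]^n:\sum_j s_j x_{ij}=m_i\}$. Hence $w_i(S)$ is the LP maximum of $\sum_j u_{ij}x_{ij}$ over this polytope, and by conflict-freeness $\mathcal P_i(S)=w_i(S)$.

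Now consider any deviation $\bs{s'_i}$, giving $S'$, equilibrium prices $\bs{p'}$ and allocation $X'\in\mathbb X(S')$. Set $r=\max_j s_j/p'_j$, $J_1=\{j:s_j/p'_j<r\}$ and $J_2=\{j:s_j/p'_j=r\}$. Because every buyer $k\neq i$ still plays $\bs{s}$, none of them demands a good in $J_1$, so market clearing forces $x'_{ij}=1$ for $j\in J_1$. The crux of the argument is the inequality $\sum_j s_j x'_{ij}\le m_i$. I plan to derive it by splitting the sum over $J_1$ and $J_2$, replacing $s_j$ by $rp'_j$ on $J_2$, substituting buyer $i$'s budget identity $\sum_j p'_j x'_{ij}=m_i$, and using the normalizations $\sum_j s_j=\sum_j p'_j=1$; this collapses to $\sum_j s_j x'_{ij}=1-r(1-m_i)\le m_i$ since $r\ge 1$ and $m_i\le 1$.

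Once this bound is in hand, the remainder is routine. Because utilities are nonnegative and $\sum_j s_j=1\ge m_i$, one can augment $x'_i$ within $[0,1]^n$ to a bundle $y_i$ with $y_{ij}\ge x'_{ij}$ and $\sum_j s_j y_{ij}=m_i$; then $y_i$ is feasible for the $w_i(S)$-LP and $u_i(y_i)\ge u_i(x'_i)$. Therefore $u_i(X')\le w_i(S)=\mathcal P_i(S)$ for every $X'\in\mathbb X(S')$, hence in particular for the Nash-product-maximizing allocation that defines $\mathcal P_i(S')$. Buyer $i$ gains nothing by deviating, so $S$ is a NESP. The main obstacle I foresee is the clean derivation of the identity $\sum_j s_j x'_{ij}=1-r(1-m_i)$ via the $J_1/J_2$ decomposition, together with care about the fact that buyer $i$ is forced to take all of each good in $J_1$; after that the feasibility/augmentation argument is standard LP reasoning.
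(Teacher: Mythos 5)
Your proof is correct, and it proves the same statement by the same underlying mechanism as the paper --- namely, that when the other buyers all play $\bs{s}$, any unilateral deviation by buyer $i$ leaves her holding a bundle she could already have afforded at the symmetric prices $p_j=s_j$ --- but the bookkeeping is genuinely different and, in places, tighter. The paper partitions goods by whether the deviation raised, lowered, or preserved their prices, notes that the non-deviators only buy price-decreased goods, and then (somewhat tersely) constructs from $X'$ a dominating allocation $X\in\mathbb X(S)$ with $x_{ij}\ge x'_{ij}$, treating the case $\bs{p}(S')=\bs{p}(S)$ separately. You instead (a) characterize $w_i(S)$ explicitly as the LP maximum of $u_i$ over the budget polytope $\{x_i\in[0,1]^n:\sum_j s_j x_{ij}=m_i\}$, which is valid because $G(S)$ is complete bipartite so any budget-feasible bundle for buyer $i$ extends to an equilibrium allocation (e.g.\ splitting the residual supply among the others in proportion to their budgets, using $m_i<1$); and (b) prove affordability via the identity $\sum_j s_j x'_{ij}=1-r(1-m_i)$ with $r=\max_j s_j/p'_j\ge 1$, where your $J_2$ is the set of bang-per-buck maximizers for the non-deviators (a subset of the paper's price-decreased set $J_1$) and $x'_{ij}=1$ on its complement by market clearing. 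This identity-based argument handles the unchanged-price case uniformly ($r=1$ gives equality) and makes explicit the augmentation step that the paper leaves implicit; what it gives up is the paper's slightly stronger conclusion of a strict payoff improvement within $\mathbb X(S)$, which you do not need since the weak inequality $u_i(X')\le w_i(S)=\mathcal P_i(S)$ for every $X'\in\mathbb X(S')$ already rules out a profitable deviation.
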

\begin{proof}
$(\Rightarrow)$ is easy (Theorem \ref{nece}). For $(\Leftarrow)$, suppose a buyer $i$ may deviate and gain, then the prices
have to be changed. In that case, all buyers except buyer $i$ will be connected to only those goods, whose prices are
decreased. This leads to a contradiction (details are in Appendix \ref{Acorr}).  \qed
\end{proof} 

Let $S^f=[s_{ij}]$ be a strategy profile, where $s_{ij}=u_{ij},\forall i\in\mathcal B, \forall j\in\mathcal G$, {\it i.e.},
true utility functions. All allocations in $\mathbb{X}(S^f)$ give the same payoff to the buyers 
({\it i.e.}, $\forall i \in \mathcal B, u_i(X)=w_i(S^f),\ \forall X\in \mathbb{X}(S^f)$),
and we define {\em Fisher payoff} $(u^f_1,\dots,u^f_m)$ to be
the payoff derived when all buyers play truthfully. 

\begin{corollary}\label{fishpay}
A symmetric NESP can be constructed, whose payoff is the same as the Fisher payoff.
\end{corollary}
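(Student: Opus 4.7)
The plan is to exhibit an explicit symmetric strategy profile that is conflict-free, and then invoke Proposition \ref{symm}. Let $\bs{p}^*=(p_1^*,\dots,p_n^*)$ be the equilibrium prices of the original (truthful) Fisher market and let $X^*=[x^*_{ij}]$ be an equilibrium allocation, so that $u_i(X^*)=u_i^f$ for every buyer $i$. Under the normalization $\sum_i m_i=1$, the prices automatically satisfy $\sum_j p_j^*=1$, so we define the symmetric strategy profile $S=(\bs{s},\dots,\bs{s})$ by $\bs{s}=\bs{p}^*$.

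I would first verify that the equilibrium prices of the market defined by $S$ are again $\bs{p}^*$, and that the solution graph $G(S)$ is the complete bipartite graph $K_{\mathcal B,\mathcal G}$. Indeed, since every buyer plays the utility tuple $\bs{p}^*$, the bang-per-buck ratio $s_j/q_j=p_j^*/q_j$ is identical across goods $j$ exactly when the price vector $\bs{q}$ is proportional to $\bs{p}^*$; the market-clearing condition and normalization then pin down $\bs{q}=\bs{p}^*$. Because every edge is tight, the set $\mathbb X(S)$ consists of all nonnegative allocations satisfying the budget identities $\sum_j p_j^* x_{ij}=m_i$ and the market-clearing identities $\sum_i x_{ij}=1$.

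Next I would compute $w_i(S)$ for each buyer. For any $X\in\mathbb X(S)$,
\[
 u_i(X)=\sum_j u_{ij}x_{ij}=\sum_j \frac{u_{ij}}{p_j^*}\,p_j^* x_{ij}\le \Bigl(\max_j \frac{u_{ij}}{p_j^*}\Bigr)\sum_j p_j^* x_{ij}=\Bigl(\max_j \frac{u_{ij}}{p_j^*}\Bigr)m_i.
\]
At the true Fisher equilibrium buyer $i$ spends her entire budget only on goods in $K_i=\arg\max_j u_{ij}/p_j^*$, so $u_i^f=(\max_j u_{ij}/p_j^*)\,m_i$, matching the upper bound. Since $X^*\in\mathbb X(S)$ itself achieves this bound, we conclude $w_i(S)=u_i^f$ for every $i$, and $X^*$ is a conflict-free allocation of $S$. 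Hence $S$ is conflict-free, and by Proposition \ref{symm} it is a NESP. Finally, because $X^*$ simultaneously maximizes every $u_i$ over $\mathbb X(S)$, it also maximizes $\prod_i u_i$, so $\mathcal P_i(S)=u_i(X^*)=u_i^f$ for all $i$, as required.

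The only nontrivial step is the equality $w_i(S)=u_i^f$: the upper bound above relies on the equal bang-per-buck property of Fisher prices, and the lower bound requires the Fisher allocation $X^*$ to actually lie in $\mathbb X(S)$. Both are straightforward once one recognizes that playing the equilibrium price vector as one's utility tuple makes every good equally attractive and therefore turns the set of admissible allocations into precisely the budget-feasible allocations at prices $\bs{p}^*$, which is the set over which the Fisher allocation is already pointwise optimal for every buyer.
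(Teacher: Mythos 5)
Your proposal is correct and follows exactly the paper's construction: take $S=(\bs{s},\dots,\bs{s})$ with $\bs{s}=\bs{p}(S^f)$, observe it is conflict-free (the Fisher allocation remains feasible and pointwise optimal at these prices), and apply Proposition \ref{symm}. The paper states this in one line ("Clearly $S$ is a symmetric NESP..."); you have simply supplied the bang-per-buck verification that the paper leaves implicit.
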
 
\begin{proof}
Let $S=(\bs{s},\dots,\bs{s})$ be a strategy profile, where $\bs{s}={\bs p}(S^f)$. Clearly $S$ is a symmetric NESP, whose
payoff is the same as the Fisher payoff.  \qed
\end{proof} 

\begin{remark}
The payoff w.r.t. a symmetric NESP is always Pareto optimal. For a Fisher market game, there is exactly one symmetric NESP
iff the degree of every good in $G(S^f)$ is at least two \cite{wm}.
\end{remark}



The characterization of all the NESPs for the general market game seems hard; even for markets with only three buyers. The
following example illustrates an asymmetric NESP, whose payoff is not Pareto optimal. 

\begin{example}\label{eg5}
Consider a market with 3 buyers and 2 goods, where $\bs{m}=\langle 50, 100,$ $50\rangle$, $\bs{u_1}=\langle 2,3\rangle,
\bs{u_2}=\langle 4,9\rangle$, and $\bs{u_3}=\langle 2,3\rangle$. Consider the two strategy profiles given by
$S_1=(\bs{s_1},\bs{s_2},\bs{s_3})$ and $S_2=(\bs{s},\bs{s},\bs{s})$, where $\bs{s_1}=\langle 2,0.1\rangle,\bs{s_2}=\langle
2,3\rangle, \bs{s_3}=\langle 0.1,3\rangle$, and $\bs{s}=\langle 2,3\rangle$. The payoff tuples w.r.t. $S_1$ and $S_2$ are
$(1.25,6.75,1.25)$ and $(1.25,7.5,1.25)$ respectively. Note that both $S_1$ and $S_2$ are NESPs for the above market (details
are in Appendix \ref{Acorr}). 
\end{example}

\section{The Two-Buyer Markets}\label{NE}
A two-buyer market consists of two buyers and a number of goods. These markets arise in numerous scenarios. 
The two firms in a duopoly may be considered as the two buyers with a similar requirements to fulfill from a large number of
suppliers, for example, relationship between two big automotive companies with their suppliers. 

In this section, we study two-buyer market game and provide a complete polyhedral characterization of NESPs, all of which
turn out to be symmetric. Next, we study how the payoffs of the two buyers change with varying NESPs
and show that these payoffs constitute a piecewise linear concave curve. For a particular payoff tuple on this curve, there
is a convex set of NESPs, hence a convex set of equilibrium prices, which leads to a different class of non-market behavior
such as incentives. Finally, we study the correlated equilibria of this game and show that third-party mediation does not
help to achieve better payoffs than any of the NESPs.  

\begin{lemma}\label{char}
All NESPs for a two-buyer market game are symmetric. 
\end{lemma}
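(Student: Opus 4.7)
The plan is to exploit the strong constraint given by Theorem \ref{nece}(ii) in the degenerate case of only two buyers. Let $S=(\bs{s_1},\bs{s_2})$ be a NESP and let $\bs{p}=\bs{p}(S)=(p_1,\dots,p_n)$ be the associated equilibrium prices. Since every good has degree at least $2$ in $G(S)$ and there are only two buyers available, each good node $g_j$ must be adjacent to both $b_1$ and $b_2$. In other words, every edge $(b_i,g_j)$ is tight, which by the definition of the solution graph says that for each buyer $i\in\{1,2\}$, the ratio $s_{ij}/p_j$ attains the common maximum over all $j\in\mathcal G$.

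Next, I would translate this into an explicit form for the strategies. Fix $i$; because $s_{ij}/p_j$ does not depend on $j$, there is a scalar $c_i>0$ with $s_{ij}=c_i\,p_j$ for every good $j$. Using the normalization $\sum_{j\in\mathcal G}s_{ij}=1$ adopted at the start of Section \ref{gen}, summing over $j$ gives $c_i=1/\sum_j p_j$, which is independent of $i$. Hence $c_1=c_2$ and consequently $\bs{s_1}=\bs{s_2}$, i.e.\ $S$ is symmetric.

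The only step that requires any care is the appeal to Theorem \ref{nece}(ii): it relies on the hypothesis that $u_{ij}>0$ and $s_{ij}>0$ for all $i,j$, which was adopted for technical convenience in Section \ref{gen}, so the argument formally proves the claim in that regime. I would briefly note that the boundary cases where some utilities vanish can be handled separately in the spirit of the blanket remark at the start of that section, so no new difficulty arises. Overall this is a short argument; the main conceptual point to highlight is simply that ``degree $\ge 2$'' is extremely restrictive when $|\mathcal B|=2$, as it forces every good to be MBB for both buyers simultaneously, which by the positivity of prices pins down the strategies up to a common scalar.
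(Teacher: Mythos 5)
Your proof is correct and follows essentially the same route as the paper: the paper argues the contrapositive (asymmetric implies $G(S)$ is not complete bipartite, so some good has degree one, contradicting Theorem~\ref{nece}(ii)), while you run the same argument forwards, spelling out explicitly why a complete bipartite solution graph plus the normalization forces $\bs{s_1}=\bs{s_2}$. No gap; your version just makes the step the paper leaves implicit fully explicit.
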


\begin{proof}
If a NESP $S=(\bs{s_1},\bs{s_2})$ is not symmetric, then $G(S)$ is not a complete bipartite graph. Therefore there is a
good, which is exclusively bought by a buyer, which is a contradiction (Theorem \ref{nece}, part $(ii)$).\qed
\end{proof}

\subsection{Polyhedral Characterization of NESPs}\label{comp_NE}
In this section, we compute all the NESPs of a Fisher market game with two buyers. Henceforth we assume that the goods are so
ordered that $\frac{u_{1j}}{u_{2j}}\ge \frac{u_{1(j+1)}}{u_{2(j+1)}}$, for $j=1,\dots,n-1$. Chakrabarty et al. \cite{two}
also use such an ordering to design an algorithm for the linear Fisher market with two agents.  Let $S=(\bs{s},\bs{s})$ be a
NESP, where $\bs{s}=(s_1,\dots,s_n)$ and $(p_1,\dots,p_n)=\bs{p}(S)$. The graph $G(S)$ is a complete bipartite graph. Since
$m_1+m_2=1$ and $\sum_{j=1}^ns_{j}=1$, we have $p_j=s_j,\forall j\in\mathcal G$. In a conflict-free allocation $X\in \mathbb
X(S)$, if $x_{1i}>0$ and $x_{2j}>0$, then clearly $\frac{u_{1i}}{p_i}\ge \frac{u_{1j}}{p_j}$ and $\frac{u_{2i}}{p_i}\le
\frac{u_{2j}}{p_j}$.


\begin{definition}
An allocation $X=[x_{ij}]$ is said to be a {\bf \em nice allocation}, if it satisfies the property: $x_{1i}>0$ and
$x_{2j}>0\ \Rightarrow\ i\le j$.
\end{definition}

The main property of a {\em nice allocation} is that if we consider the goods in order, then from left to right, goods
get allocated first to buyer $1$ and then to buyer $2$ exclusively, however they may share at most one good in between. Note
that a symmetric strategy profile has a unique nice allocation. 

\begin{lemma}\label{order}
Every NESP has a unique conflict-free nice allocation.
\end{lemma}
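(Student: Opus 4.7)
By Lemma~\ref{char} the NESP must be symmetric, so write it as $S=(\bs{s},\bs{s})$. From $\sum_i m_i = 1 = \sum_j s_j$ and the fact that under identical feigned utilities the only equilibrium prices are $p_j=s_j$, the solution graph $G(S)$ is complete bipartite and $\mathbb{X}(S)$ coincides with the full polytope of market-clearing allocations satisfying the two budget equations $\sum_j p_j x_{ij}=m_i$. Uniqueness of the nice allocation itself is already noted immediately before the lemma: a nice allocation is parameterized by an index $k$ and a fraction $\lambda\in[0,1]$ with $x_{1j}=1$ for $j<k$, $x_{1k}=\lambda$ and $x_{1j}=0$ for $j>k$, and the budget identity $\sum_{j<k} p_j + \lambda p_k = m_1$ together with $p_j>0$ pins $(k,\lambda)$ down. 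Call this unique nice allocation $X^*$; the remaining task is to show $X^*$ is conflict-free.

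To do this I would start from any conflict-free allocation $X$, guaranteed to exist by Theorem~\ref{nece}(i), and ``sort'' it into $X^*$ by a sequence of payoff-preserving swaps. If $X$ is not nice, pick an inversion $i<j$ with $x_{2i}>0$ and $x_{1j}>0$ and perform the budget-neutral swap that shifts $\epsilon$ units of good $i$ from buyer $2$ to buyer $1$ and $\epsilon p_i/p_j$ units of good $j$ from buyer $1$ to buyer $2$. For small $\epsilon>0$ the result $X'$ lies in $\mathbb{X}(S)$, and the changes in the two true payoffs are $\epsilon(u_{1i}-(p_i/p_j)u_{1j})$ and $\epsilon((p_i/p_j)u_{2j}-u_{2i})$, both of which must be nonpositive because $X$ attains the maxima $w_1(S)$ and $w_2(S)$. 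Multiplying the two resulting inequalities gives $u_{1i}u_{2j}\le u_{1j}u_{2i}$, i.e.\ $u_{1i}/u_{2i}\le u_{1j}/u_{2j}$, while the assumed ordering of goods forces the reverse inequality. Hence equality holds throughout, both payoff changes are exactly zero, and $X'$ is again conflict-free.

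Taking $\epsilon$ maximal at each iteration drives one of $x_{1i},x_{2i},x_{1j},x_{2j}$ to its $0$ or $1$ boundary, so using the number of coordinates strictly interior to $[0,1]$ as a monovariant and always processing the leftmost remaining inversion, the procedure should terminate in finitely many steps at a nice allocation without ever leaving the conflict-free face. By the uniqueness of the nice allocation this terminal state must be $X^*$, and since payoffs were preserved at every swap, $X^*$ is conflict-free as required. The main technical hurdle I anticipate lies in making the finite-termination argument airtight when there are ties $u_{1i}/u_{2i}=u_{1j}/u_{2j}$: there the forced-equality deduction becomes vacuous and one must rule out cycling by a careful choice of swap direction and magnitude, which is exactly what the maximal, leftmost-inversion rule is designed to achieve.
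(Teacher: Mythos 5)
Your proposal is correct and takes essentially the same route as the paper: start from a conflict-free allocation (which exists by Theorem~\ref{nece}(i) since $S$ is a NESP), use conflict-freeness together with the fixed ordering $\frac{u_{1j}}{u_{2j}}\ge\frac{u_{1(j+1)}}{u_{2(j+1)}}$ to force equal bang-per-buck on any inversion pair, so the swap changes neither payoff, and repeat until the unique nice allocation is reached. Your termination bookkeeping is in fact more explicit than the paper's (which just says to repeat the exchange), and the tie case you flag is harmless, since the forced equalities and zero payoff change follow there as well.
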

\begin{proof}
The idea is to convert a conflict-free allocation into a nice allocation through an exchange s.t. payoff remains same
(details are in Appendix \ref{Acorr}).\qed
\end{proof}

The non-zero edges in a nice allocation either form a tree or a forest containing two trees. We use the properties of nice
allocations and NESPs to give the polyhedral characterization of all the NESPs. The convex sets $B_k$ for all $1\le
k\le n$, as given in Table \ref{NE1}, correspond to all possible conflict-free nice allocations, where non-zero edges form a
tree, and the convex sets $B_k'$ for all $1\le k\le n-1$, as given in Table \ref{NE2}, correspond to all possible
conflict-free nice allocations, where non-zero edges form a forest\symbolfootnote[1]{In both the tables $\alpha_i$'s may 
be treated as price variables.}. Let $\mathbb B=\displaystyle\cup_{k=1}^n B_k\cup_{k=1}^{n-1}B_k'$ and
$S^{NE}=\{(\bs{\alpha},\bs{\alpha})\ |\ \bs{\alpha}=(\alpha_1,\dots,\alpha_n)\in \mathbb B \}$. Note that $S^{NE}$ is a
connected set. 

\begin{table}[here]
\vspace{-0.2cm}
\begin{minipage}{0.45\textwidth}
\centering
\begin{tabular}{|rcll|}\hline
$\sum_{i=1}^{k-1} \alpha_i$ & $<$ & $m_1$ & \\
$\sum_{i=k+1}^n \alpha_i$ & $<$  & $m_2$ & \\
$\sum_{i=1}^n \alpha_i$ & $=$ & $m_1+m_2$ & \\
$u_{1j}\alpha_i-u_{1i}\alpha_j$ &  $\le$ & $0$ & $\forall i\le k,\forall j\ge k$ \\
$u_{2i}\alpha_j - u_{2j}\alpha_i$ & $\le$ & $0$ & $\forall i\le k,\forall j\ge k$ \\
$\alpha_i$ & $\ge$ & $0$ & $\forall i\in{\mathcal G}$ \\
\hline
\end{tabular}
\caption{$B_k$}\label{NE1}
\end{minipage}
\begin{minipage}{0.6\textwidth}
\centering
\begin{tabular}{|rcll|}\hline
$\sum_{i=1}^k \alpha_i$ & $=$ & $m_1$ &\\
$\sum_{i=k+1}^n \alpha_i$ &  $=$ & $m_2$ &\\
$u_{1j}\alpha_i - u_{1i}\alpha_j$ & $\le$ & $0$ & $\forall i\le k, \forall j\ge k+1$ \\
$u_{2i}\alpha_j - u_{2j}\alpha_i$ & $\le$ & $0$ & $\forall i\le k, \forall j\ge k+1$ \\
$\alpha_i$ & $\ge$ & $0$ & $\forall i\in \mathcal G$ \\
\hline
\end{tabular}
\caption{$B'_k$}\label{NE2}
\end{minipage}
\vspace{-0.7cm}
\end{table}

\begin{lemma}\label{2buyerNESP}
A strategy profile $S$ is a NESP iff $S\in S^{NE}$. 
\end{lemma}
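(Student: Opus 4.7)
\textbf{Proof plan for Lemma \ref{2buyerNESP}.} The overall strategy is to exploit the three earlier results about two-buyer NESPs in tandem: Lemma \ref{char} forces every NESP to be symmetric, Proposition \ref{symm} reduces the NESP test for symmetric profiles to conflict-freeness, and Lemma \ref{order} guarantees each NESP has a unique conflict-free nice allocation. Thus both directions of the lemma amount to translating the combinatorial structure of a nice conflict-free allocation into the linear (in)equalities defining $B_k$ and $B'_k$. Throughout, given a symmetric profile $S=(\bs\alpha,\bs\alpha)$, the normalizations $m_1+m_2=1$ and $\sum_j\alpha_j=1$ together with market clearing force $p_j=\alpha_j$, and the solution graph $G(S)$ is complete bipartite (every edge tight).

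For the forward direction, assume $S$ is a NESP. By Lemma \ref{char} write $S=(\bs\alpha,\bs\alpha)$, and by Lemma \ref{order} pick its unique conflict-free nice allocation $X$. By niceness, there is a smallest index $k$ such that buyer $2$ holds a positive amount of some good with index $\ge k$, and the set of goods bought by buyer $1$ (resp.\ $2$) is $\{1,\dots,k\}$ (resp.\ $\{k,\dots,n\}$ or $\{k+1,\dots,n\}$), with at most one overlapping good. If the two buyers share good $k$, i.e.\ $0<x_{1k},x_{2k}<1$, then the non-zero edges form a tree; the budget constraints of the two buyers give $\sum_{i<k}\alpha_i<m_1$, $\sum_{i>k}\alpha_i<m_2$, while $\sum_i\alpha_i=m_1+m_2$ follows from market clearing. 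Optimality of $X$ for the true utilities (conflict-freeness of $X$) rules out any exchange that would increase either buyer's payoff, and since $G(S)$ is complete bipartite any such exchange is feasible; this gives precisely $u_{1i}/\alpha_i\ge u_{1j}/\alpha_j$ and $u_{2j}/\alpha_j\ge u_{2i}/\alpha_i$ for $i\le k\le j$, i.e.\ the utility inequalities defining $B_k$. If instead no good is shared (forest case), then $\sum_{i\le k}\alpha_i=m_1$ and $\sum_{i>k}\alpha_i=m_2$, and the analogous optimality argument (now with $i\le k,\,j\ge k+1$) gives the defining inequalities of $B'_k$.

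For the backward direction, suppose $\bs\alpha\in B_k$ (the $B'_k$ case is parallel). I construct the candidate nice allocation by setting $x_{1i}=1$ for $i<k$, $x_{2i}=1$ for $i>k$, and $x_{1k}=(m_1-\sum_{i<k}\alpha_i)/\alpha_k,\ x_{2k}=1-x_{1k}$, which lies in $(0,1)$ by the two strict inequalities in $B_k$; one checks market clearing and budget exhaustion directly. This $X$ lies in $\mathbb X(S)$ because $G(S)$ is complete bipartite, so every edge is tight. It remains to show $u_i(X)=w_i(S)$ for each buyer. For buyer $1$, any $Y\in\mathbb X(S)$ differs from $X$ by a flow that transfers some allocation between the ``left'' goods $\{1,\dots,k\}$ and the ``right'' goods $\{k,\dots,n\}$ while preserving her budget; the inequalities $u_{1j}\alpha_i\le u_{1i}\alpha_j$ in $B_k$ say exactly that such transfers cannot improve $u_1$. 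The argument for buyer $2$ is symmetric. Hence $X$ is conflict-free, and Proposition \ref{symm} yields that $S$ is a NESP. The main technical obstacle I anticipate is the last step: carefully formalizing an arbitrary $Y\in\mathbb X(S)$ as a budget-preserving rearrangement of $X$ and showing the utility inequalities exactly block every improving rearrangement; this essentially amounts to a finite LP-type exchange argument comparing $\sum_j u_{1j}(y_{1j}-x_{1j})$ against the given inequalities, and analogously for buyer $2$.
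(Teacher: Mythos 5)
Your proposal is correct and follows essentially the same route as the paper: the forward direction combines Lemma \ref{char} and Lemma \ref{order} and reads off the constraints of $B_k$/$B'_k$ from the structure of the conflict-free nice allocation, while the backward direction constructs the nice allocation from $\bs{\alpha}$ and invokes Proposition \ref{symm}. The paper's own proof is just a terser statement of exactly this argument, and the remaining step you flag (the exchange/fractional-knapsack optimality check showing the $B_k$ inequalities block any improving reallocation) is routine and handled correctly in your outline.
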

\begin{proof}
($\Leftarrow$) is easy by the construction and Proposition \ref{symm}. For the other direction, we know that every NESP has a
conflict-free nice allocation (Lemma \ref{order}), and $\mathbb B$ corresponds to all possible conflict-free nice
allocations.\qed
\end{proof} 

\subsection{The Payoff Curve}\label{payoff}
In this section, we consider the payoffs obtained by both the players at various NESPs. Recall that whenever a strategy
profile $S$ is a NESP, $\mathcal P_i(S)=w_i(S),\ \forall i\in\mathcal B$. Henceforth, we use $w_i(S)$ as the payoff of buyer
$i$ for the NESP $S$. 
Let $\mathbb F=\{(w_1(S), w_2(S))\ |\ S \in S^{NE}\}$ be the set of all possible NESP payoff tuples.

Let $\mathcal X$ be the set of all nice allocations, and $\mathbb H=\{(u_1(X),u_2(X))\ |\ X\in\mathcal X\}$. 
For $\alpha\in[0,1]$, let  $t(\alpha)=(\langle s_1,\dots,s_n\rangle$, $\langle s_1,\dots,s_n\rangle)$, where
$s_i=u_{1i}+\alpha(u_{2i}-u_{1i})$, and $\mathbb G=\{(w_1(S),w_2 (S))\ |\ S=t(\alpha), \alpha \in [0,1]\}$.

\begin{proposition}\label{prop_curve}
$\mathbb F$ is a piecewise linear concave (PLC) curve.
\end{proposition}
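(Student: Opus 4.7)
The plan is to analyze the payoff map on each of the polyhedral pieces $B_k$ and $B'_k$ that make up $S^{NE}$, show that each piece maps to a line segment (or a point), and then verify that these segments chain together into a concave curve using the ordering assumption $u_{1j}/u_{2j} \ge u_{1(j+1)}/u_{2(j+1)}$.

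First I would pin down the unique conflict-free nice allocation on each piece. On $B'_k$ the graph is a forest with two components, so by the nice-allocation property buyer $1$ receives goods $\{1,\dots,k\}$ entirely and buyer $2$ receives goods $\{k+1,\dots,n\}$ entirely; thus the payoff is the constant point
\[
\bigl(P_1^{(k)},P_2^{(k)}\bigr) \;:=\; \Bigl(\textstyle\sum_{i=1}^{k} u_{1i},\; \sum_{j=k+1}^{n} u_{2j}\Bigr).
\]
On $B_k$ the graph is a single tree, good $k$ is shared, and using $p_j=\alpha_j$ and $m_1+m_2=1$, buyer $1$'s share of good $k$ is $x_{1k}=\bigl(m_1-\sum_{i<k}\alpha_i\bigr)/\alpha_k$. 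The defining inequalities of $B_k$ say exactly that $x_{1k}\in[0,1]$, and as $\bs{\alpha}$ ranges over $B_k$ this parameter attains every value in $[0,1]$.

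Next I would write the payoffs as explicit linear functions of $x_{1k}$:
\[
w_1(S)=P_1^{(k-1)}+u_{1k}\,x_{1k},\qquad w_2(S)=P_2^{(k)}+u_{2k}\,(1-x_{1k}).
\]
So the image of $B_k$ in payoff space is precisely the line segment joining $(P_1^{(k-1)},P_2^{(k-1)})$ and $(P_1^{(k)},P_2^{(k)})$ (noting $P_2^{(k-1)}-P_2^{(k)}=u_{2k}$). The point image of $B'_k$ coincides with the common endpoint of the segments coming from $B_k$ and $B_{k+1}$, so by Lemma \ref{2buyerNESP} the set $\mathbb F$ is the concatenated polygonal chain through $(P_1^{(0)},P_2^{(0)}),(P_1^{(1)},P_2^{(1)}),\dots,(P_1^{(n)},P_2^{(n)})$, establishing piecewise linearity.

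Finally, for concavity I would compute the slope of the $k$-th segment as
\[
\frac{P_2^{(k)}-P_2^{(k-1)}}{P_1^{(k)}-P_1^{(k-1)}} \;=\; -\frac{u_{2k}}{u_{1k}}.
\]
The assumed ordering $u_{1k}/u_{2k}\ge u_{1(k+1)}/u_{2(k+1)}$ is equivalent to $u_{2k}/u_{1k}\le u_{2(k+1)}/u_{1(k+1)}$, so the slopes are non-increasing in $k$ as $P_1$ increases, which is precisely the concavity of the curve viewed as $P_2$ versus $P_1$. The only technical nuisance, and the main thing I expect to need care with, is justifying that $x_{1k}$ genuinely sweeps all of $[0,1]$ on the closure of $B_k$ and that adjacent segments actually share an endpoint rather than being separated by an isolated point image of $B'_k$; both follow by inspecting the inequalities in Tables \ref{NE1}--\ref{NE2} which differ only by turning one strict inequality into an equality.
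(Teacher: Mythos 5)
Your decomposition into the pieces $B_k$, $B'_k$, the payoff formulas on each piece, and the slope computation $-u_{2k}/u_{1k}$ together with the ordering assumption are all fine, and that part matches the paper's Claim about $\mathbb H$. The genuine gap is the step you yourself flag as a ``technical nuisance'': it is not true that the shared fraction $x_{1k}$ sweeps all of $[0,1]$ on (the closure of) $B_k$, nor that every $B_k$ is non-empty. The constraints in Tables \ref{NE1}--\ref{NE2} are not just a budget split; the bang-per-buck inequalities $u_{1j}\alpha_i\le u_{1i}\alpha_j$, $u_{2i}\alpha_j\le u_{2j}\alpha_i$ together with the price normalization $\sum_i\alpha_i=m_1+m_2$ can make $B_k$ empty and can restrict which shares of good $k$ are attainable. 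Consequently your conclusion that $\mathbb F$ is the full chain through $(P_1^{(0)},P_2^{(0)}),\dots,(P_1^{(n)},P_2^{(n)})$, i.e.\ running from $(0,1)$ to $(1,0)$, is false: that chain is $\mathbb H$, the payoffs of \emph{all} nice allocations, whereas $\mathbb F$ is in general a proper sub-chain of it. The paper's Example \ref{eg6} is a concrete counterexample: there the NESP payoff curve runs only from $(7,8.25)$ (payoff of $t(0)$) to $(9.14,3)$ (payoff of $t(1)$), not from $(0,10)$ to $(10,0)$; indeed in that market $B_1=\emptyset$, since the inequalities force $\alpha_2\ge\alpha_1/3$ and $\alpha_3\ge\alpha_1/3$ with $\alpha_1>7$, contradicting $\alpha_2+\alpha_3<3$.

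What is missing, and what the paper supplies, is an argument identifying exactly which portion of $\mathbb H$ is realized by NESPs and why it is a single connected sub-chain: the paper shows that the ``imitation'' strategies $t(\alpha)$, $\alpha\in[0,1]$, are all NESPs, that their payoffs $\mathbb G$ form a connected PLC piece of $\mathbb H$ with endpoints given by $t(0)$ and $t(1)$, and then --- the essential step --- that \emph{every} NESP's conflict-free nice allocation coincides with that of some $t(\alpha)$ (its claim that $\mathbb F=\mathbb G$, proved by showing a NESP payoff for buyer $1$ below $w_1(t(0))$ or above $w_1(t(1))$ would force a price/preference contradiction with niceness). To repair your proof along your own lines you would need to (i) characterize, from the inequalities of $B_k$ and $B'_k$, the exact interval of attainable shares $x_{1k}$ for each $k$ (possibly empty), and (ii) prove that the non-empty pieces produce consecutive, overlapping-at-endpoints segments whose union is connected --- which is essentially equivalent to the paper's $\mathbb F=\mathbb G$ argument rather than a shortcut around it.
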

\begin{proof} 
\vspace{-0.13cm}
The proof is based on the following steps (details are in Appendix \ref{Acorr}).
\begin{enumerate}
\item $\mathbb H$ is a PLC curve with $(0,1)$ and $(1,0)$ as the end points.
\item $\forall\alpha\in[0,1],\  t(\alpha)\in S^{NE}$, then clearly $\mathbb G\subset{\mathbb H}$. Since the nice allocation w.r.t.
$t(\alpha)$ changes continuously as $\alpha$ moves from $0$ to $1$, so we may conclude that $\mathbb G$ is a PLC curve with the end
points $(w_1(S^1),w_2(S^1))$ and $(w_1(S^2),w_2(S^2))$, where $S^1=t(0)$ and $S^2=t(1)$.  
\item $\mathbb F=\mathbb G$. \qed
\end{enumerate}
\end{proof}

The next example demonstrates the payoff curve for a small market game.
\vspace{-0.1cm}
\begin{example}\label{eg6}
Consider a market with 3 goods and 2 buyers, where $\bs{m}=\langle 7,3\rangle$, $\bs{u_1}=\langle 6, 2, 2 \rangle$, and
$\bs{u_2}=\langle 0.5, 2.5, 7 \rangle$. The payoff curve for this game is shown in the following figure.

\centerline{
\epsfysize=3.5cm
\epsfxsize=4.8cm
\epsfbox{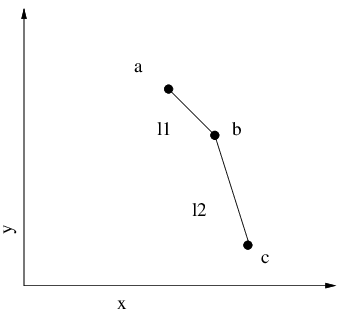}
}

The first and the second line segment of the curve correspond to the sharing of good $2$ and $3$ respectively. 
The payoffs corresponding to the boundary NESPs $S^1=t(0)$ and $S^2=t(1)$ are $(7,8.25)$ and
$(9.14,3)$ respectively. Furthermore, the {\em Fisher payoff} $(8,7)$ may be achieved by a NESP $t(0.2)$. 
Note that in this example the social welfare ({\it i.e.}, sum of the payoffs of both the buyers) from the
Fisher payoff ($15$) is lower than that of the NESP $S^1$ ($15.25$).
\end{example}

\vspace{-0.5cm}
\subsection{Incentives}\label{bribe}
For a fixed payoff tuple on the curve $\mathbb F$, there is a convex set of NESPs and hence a convex set of prices,
giving the same payoffs to the buyers, and these can be computed using the convex sets defined in Table \ref{NE1} and
\ref{NE2}. This leads to a different class of behavior, {\it i.e.}, motivation for a seller to offer incentives to the buyers
to choose a particular NESP from this convex set, which fetches the maximum price for her good. The following example
illustrates this possibility.

\begin{example}\label{eg7}
Consider a market with 2 buyers and 4 goods, where $\bs{m}=\langle 10,10\rangle$, $\bs{u_1}=\langle 4,3,2,1\rangle$, and
$\bs{u_2}=\langle 1,2,3,4\rangle$. Consider the two NESPs given by $S_1=(\bs{s_1},\bs{s_1})$ and $S_2=(\bs{s_2},\bs{s_2})$,
where $\bs{s_1}=\langle \frac{20}{3},\frac{20}{3},\frac{10}{3},\frac{10}{3}\rangle$ and
$\bs{s_2}=\langle\frac{20}{3},\frac{20}{3},\frac{9}{3},\frac{11}{3}\rangle$. Both $S_1$ and $S_2$ gives the payoff 
$(5.5,8)$, however the prices are different, {\it i.e.}, $\bs p(S_1)=\langle \frac{20}{3},\frac{20}{3},\frac{10}{3},$
$\frac{10}{3} \rangle$ and $\bs p(S_2)=\langle \frac{20}{3},\frac{20}{3},\frac{9}{3},\frac{11}{3}\rangle$. Clearly in $S_2$,
good $3$ is penalized and good $4$ is rewarded (compared to $S_1$).
\end{example}

\vspace{-0.7cm}
\subsection{Correlated Equilibria}\label{coE}
\vspace{-0.1cm}
We have seen in the previous section that the two-buyer market game has a continuum of Nash Equilibria, with
very different and conflicting payoffs. This makes it difficult to predict how a particular game will actually play out in
practice, and if there is a different solution concept which may yield an outcome liked by both the players. 

We examine the correlated equilibria framework as a possibility. Recall that according to the correlated equilibria, the
mediator decides and declares a probability distribution $\pi$ on all possible pure strategy profiles $(\bs{s_1},\bs{s_2})
\in \mathbb S_1\times \mathbb S_2$ beforehand. During the play, she suggests what strategy to play to each player privately,
and no player benefits by deviating from the advised strategy.

Let $\Psi=\{ (x,y)\ |\ \exists (x_0 ,y_0)\in \mathbb H, \mbox{ such that } x\leq x_0 \: and \: y\leq y_0\}$, where $\mathbb H$ is the
PLC curve discussed in Section \ref{payoff}. Clearly, $\Psi$ is a convex set. The question we ask: Is there a correlated
equilibrium $\pi$ such that the payoff w.r.t. $\pi$ lies above the curve $\mathbb H$? We continue with our assumption that
$\frac{u_{1j}}{u_{2j}}\ge \frac{u_{1(j+1)}}{u_{2(j+1)}},\forall j<n$.

\begin{lemma}
\label{underg}
For any strategy profile $S=(\bs{s_1},\bs{s_2})$, for every allocation $X\in\mathbb X(S)$, there exists a point $(x_1,x_2)$
on $\mathbb H$ such that $x_1\geq{u_1(X)}$ and $x_2\geq{u_2(X)}$. 
\end{lemma}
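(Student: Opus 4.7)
The plan is to locate a \emph{nice} allocation whose payoff coordinate-wise dominates $(u_1(X), u_2(X))$; by definition of $\mathbb{H}$ this supplies the desired point $(x_1, x_2)$. It is convenient to forget $S$ entirely and work inside the larger polytope $\mathcal{A} := \{ Y = [y_{ij}] : y_{ij} \geq 0,\ y_{1j} + y_{2j} = 1\ \forall j \in \mathcal{G}\}$ of all market-clearing allocations, since $\mathbb{X}(S) \subseteq \mathcal{A}$ and the nice allocations $\mathcal{X} \subseteq \mathcal{A}$.

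The key fact I would prove is that the achievable-payoff set $\mathcal{P} := \{(u_1(Y), u_2(Y)) : Y \in \mathcal{A}\} \subset \mathbb{R}^2$ is compact and convex (immediate from compactness and convexity of $\mathcal{A}$ and linearity of the $u_i$'s), and that every Pareto-optimal point of $\mathcal{P}$ is realised by a nice allocation. For the latter I would invoke the supporting-hyperplane theorem: every Pareto-optimal $(a^*, b^*) \in \mathcal{P}$ is a maximiser of $u_1 + \lambda u_2$ over $\mathcal{A}$ for some $\lambda \in [0, \infty]$ (the extremes $\lambda = 0, \infty$ correspond to giving all goods to a single buyer, which is trivially nice). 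The separable maximisation
\[
\max_{Y \in \mathcal{A}} \sum_{j \in \mathcal{G}} \bigl( u_{1j}\, y_{1j} + \lambda\, u_{2j}\, y_{2j} \bigr)
\]
awards good $j$ to whichever buyer has the larger coefficient, with a free choice on ties. Since by the standing assumption $u_{1j}/u_{2j}$ is non-increasing in $j$, the goods with $u_{1j}/u_{2j} > \lambda$ form a prefix of $\mathcal{G}$, those with $u_{1j}/u_{2j} < \lambda$ form a suffix, and any tied block is contiguous and can be split in an order-consistent fashion. The resulting $Y^*$ is therefore nice and realises $(a^*, b^*)$.

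To finish, set $(a, b) := (u_1(X), u_2(X)) \in \mathcal{P}$ and consider the compact convex set $\mathcal{P} \cap \{(x, y) : x \geq a,\ y \geq b\}$, which is non-empty since it contains $(a, b)$. Choosing $(a^*, b^*)$ that maximises $x + y$ on this intersection immediately gives $a^* \geq a$ and $b^* \geq b$, and $(a^*, b^*)$ must in fact be Pareto-optimal in all of $\mathcal{P}$: any strict dominator in $\mathcal{P}$ would still lie in the intersection and strictly improve the sum, contradicting the choice of $(a^*, b^*)$. The previous paragraph then furnishes a nice allocation attaining $(a^*, b^*)$, so $(a^*, b^*) \in \mathbb{H}$, as required.

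The only delicate point is handling ties in the ratios $u_{1j}/u_{2j}$ when constructing $Y^*$, but this is purely cosmetic: within a tied block the $\lambda$-objective is invariant under any feasible split, so one may always reorder within the block to obtain the nice structure. Beyond this I do not foresee any substantive obstacle.
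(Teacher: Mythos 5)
Your proof is correct, but it takes a genuinely different route from the paper's. The paper argues by a direct, constructive exchange: if $X$ is not nice, it finds goods $i<j$ held in the ``wrong'' order by the two buyers and, using the standing ordering $\frac{u_{1i}}{u_{2i}}\ge\frac{u_{1j}}{u_{2j}}$, swaps amounts of goods $i$ and $j$ between the buyers in a ratio $\frac{w}{z}$ chosen with $\frac{u_{2i}}{u_{2j}}\le\frac{w}{z}\le\frac{u_{1i}}{u_{1j}}$, so that neither payoff decreases; iterating this exchange terminates in a nice allocation whose payoff dominates $(u_1(X),u_2(X))$. You instead argue globally: the payoff region $\mathcal P$ over all market-clearing allocations is compact and convex, the maximizer of $x+y$ on $\mathcal P\cap\{x\ge u_1(X),\,y\ge u_2(X)\}$ is Pareto optimal in $\mathcal P$, and scalarization plus the prefix/suffix structure of the separable maximization (again using the ordering of the ratios $u_{1j}/u_{2j}$) shows every Pareto-optimal payoff is realized by a nice allocation, hence lies on $\mathbb H$. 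Your route buys a stronger structural fact---$\mathbb H$ contains the whole Pareto frontier of the feasible payoff region---which in passing re-derives the concavity statement of Claim \ref{conc}, while the paper's swap argument is more elementary and reuses the same exchange step that drives Lemma \ref{order}. One point you should tighten: for the tied block $\{j:\ u_{1j}=\lambda u_{2j}\}$, invariance of the $\lambda$-objective under re-splitting does not by itself show that your nice reshuffle attains the \emph{same} payoff pair $(a^*,b^*)$ rather than merely the same objective value; note instead that on this block $u_{2j}=u_{1j}/\lambda$ makes buyer $2$'s block utility an affine function of buyer $1$'s, and that nice splits of the block (sliding the split point) sweep buyer $1$'s block utility over its full range, so by continuity the payoff pair of the given maximizer is matched exactly by a nice maximizer. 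With that small patch the argument is complete.
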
 
\vspace{-0.2in}
\begin{proof}
Any allocation $X$ may be converted to a nice allocation through an exchange such that no buyer worse off (details are in
Appendix \ref{Acorr}).\qed
\end{proof}

\begin{corollary}\label{corr}
The correlated equilibrium cannot give better payoff than any NE payoff to all the buyers. 
\end{corollary}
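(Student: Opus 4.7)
The plan is to combine Lemma \ref{underg} with the convexity of $\Psi$ and the fact that every NE payoff lies on the upper boundary $\mathbb H$ of $\Psi$. A correlated equilibrium is a distribution $\pi$ on $\mathbb S_1 \times \mathbb S_2$, and the expected payoff under it is
\[
(E_1, E_2) \;=\; \sum_{S} \pi(S)\,(\mathcal P_1(S), \mathcal P_2(S)),
\]
a convex combination of pure-strategy payoffs. The goal is to show that $(E_1, E_2)$ cannot strictly Pareto-dominate any NE payoff $(w_1(S^{*}), w_2(S^{*}))$ with $S^{*} \in S^{NE}$.

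First I would observe that $\Psi$ is convex: by definition it is the lower-left closure of the concave PLC curve $\mathbb H$, so it is an intersection of halfspaces cut out by the supporting lines of $\mathbb H$. Next, Lemma \ref{underg} places each pure-profile payoff $(\mathcal P_1(S), \mathcal P_2(S))$ in $\Psi$, because $\mathcal P_i(S) = u_i(X)$ for some $X \in \mathbb X(S)$ and the lemma produces a point of $\mathbb H$ dominating $(u_1(X), u_2(X))$. Since a convex combination of points of a convex set stays in the set, $(E_1, E_2) \in \Psi$.

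Then I would invoke Proposition \ref{prop_curve}: every NE payoff belongs to $\mathbb F = \mathbb G \subset \mathbb H$, so it lies on the upper boundary of $\Psi$. Since $\mathbb H$ is a strictly decreasing concave PLC curve connecting $(0,1)$ to $(1,0)$, no two points on $\mathbb H$ are Pareto-comparable, and no point of $\Psi$ can strictly dominate a point of $\mathbb H$. Applying this to $(E_1, E_2) \in \Psi$ and $(w_1(S^{*}), w_2(S^{*})) \in \mathbb H$ delivers the corollary: one cannot have $E_1 > w_1(S^{*})$ and $E_2 > w_2(S^{*})$ simultaneously.

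The main obstacle is the last Pareto-optimality step, namely verifying that $\mathbb H$ is genuinely the Pareto boundary of $\Psi$ and admits no two mutually comparable points. This reduces to strict monotonicity of the PLC curve $\mathbb H$, which I would read off from the nice-allocation construction in Section \ref{payoff}: shifting any positive fraction of a good from buyer $2$ to buyer $1$ strictly raises $u_1(X)$ and strictly lowers $u_2(X)$ under the positive-utility assumption $u_{ij} > 0$. Once that is pinned down, the chain ``convex combination stays in $\Psi$'' plus ``$\mathbb H$ is the Pareto frontier of $\Psi$'' closes the argument in one line.
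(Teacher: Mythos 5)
Your proof is correct and follows exactly the route the paper intends: the paper leaves the corollary's proof implicit, having introduced $\Psi$, noted its convexity, and proved Lemma \ref{underg} precisely so that every pure-profile payoff (hence every convex combination of them) lies in $\Psi$, while NE payoffs lie on the strictly decreasing frontier $\mathbb H$ and thus cannot be strictly dominated by any point of $\Psi$. Your extra check that $\mathbb H$ is strictly decreasing (slopes $-u_{2i}/u_{1i}<0$ under $u_{ij}>0$) is the right way to pin down the final Pareto step.
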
 

\vspace{-0.1in}
\begin{remark}
\cite{wm} extends this result for the general Fisher market game.
\end{remark}

\vspace{-0.5cm}
\section{Conclusion}\label{con_fut}
\vspace{-0.2cm}
The main conclusion of the paper is that Fisher markets in practice will rarely be played with true utility functions. In
fact, the utilities employed will usually be a mixture of a player's own utilities and her conjecture on the other player's
true utilities. Moreover, there seems to be no third-party mediation which will induce players to play according to their
true utilities so that the true Fisher market equilibrium may be observed. Further, any notion of market equilibrium should
examine this aspect of players strategizing on their utilities. This poses two questions: (i) is there a mechanism which will
induce players into revealing their true utilities? and (ii) how does this mechanism reconcile with the "invisible hand" of
the market? The strategic behavior of agents 
and the question whether true preferences may ever be revealed, has been of intense study in economics
\cite{mas,samuel1,micro}. 
The main point of departure for this paper is that buyers strategize directly on utilities rather than market
implementation specifics, like trading posts and bundles.
Hopefully, some of these analysis will lead us to a more effective computational model for markets.


On the technical side, the obvious next question is to completely characterize the NESPs for the general Fisher market game.
We assumed the utility functions
of the buyers to be linear, however Fisher market is gameable for the other class of utility functions as well. It will be
interesting to do a similar analysis for more general utility functions. 

\vspace{-0.3cm}

\newpage
\begin{appendix}

\section{Explanation of Procedures}\label{expl}
{\bf ConflictRemoval Procedure}

\vspace{0.3cm}
It takes a strategy profile $S$, a buyer $a$ and a positive number $\delta$, and
outputs another strategy profile $S'$, where $\bs{s'_i}=\bs{s_i},\ \forall i\neq a$ such that $u_a(X')>w_a(S)-\delta,\
\forall X'\in\mathbb X(S')$. The idea is that if a buyer, say $a$, does not belong to any cycle in the solution graph, then
her payoff is same across all equilibrium allocations, {\it i.e.}, $u_a(X)=w_a(S),\forall X\in\mathbb X(S)$. The procedure
essentially breaks all the cycles containing $b_a$ in $G(S)$ iteratively such that the payoff of buyer $a$ does not decrease
by more than $\delta$. 

This is done by first picking a good $b$ from $J$, which gives the maximum payoff per unit of money to buyer $a$ among all
the goods in $J$, where $J$ is the set of goods $j$, such that the edge $(b_a,g_j)$ belongs to a cycle in $G(S)$. Then, it picks
an allocation $X=[x_{ij}]$ such that $u_a(X)=w_a(S)$ and $x_{ab}$ is maximum among all allocations in $\mathbb X(S)$. It is
easy to check that such an $X$ exists. Finally, using Perturbation procedure, it obtains another strategy profile $S'$, where
$\bs{s'_i}=\bs{s_i},\ \forall i\neq a$ and $w_a(S')>w_a(S)-\frac{\delta}{n}$. The edge $(b_a,g_b)$ does not belong to any cycle
in $G(S')$ and $E(G(S'))\subset E(G(S))$. Then, it repeats the above steps for the strategy profile $S'$ until $b_a$ belongs
to a cycle in the solution graph. Clearly, there may be at most $n$ repetition steps and the final strategy profile $S'$ is
such that $u_a(X')>w_a(S)-\delta,\ \forall X'\in\mathbb X(S')$.\\
\\
{\bf Perturbation Procedure}

\vspace{0.3cm}
It takes a strategy profile $S$, a buyer $a$, a good $b$, an allocation $X\in\mathbb X(S)$, where $x_{ab}$ is maximum among
all allocations in $\mathbb X(S)$ and a positive number $\gamma$, and outputs another strategy profile $S'$ such that
$\bs{s'_i}=\bs{s_i},\ \forall i\neq a$ and $w_a(S')>u_a(X)-\gamma$. It essentially breaks all the cycles containing the edge
$(b_a,g_b)$ in $G(S)$. 

If $(b_a,g_b)\not\in G(S)$, then it outputs $S'=S$. Otherwise, let $J_1$ and $J_2$ be the sets of buyers and goods to which
there is an alternating path w.r.t. $X$ starting from $b_a$ and $g_b$ in $G(S)\setminus (b_a,g_b)$ respectively. Note that
$J_1\cap J_2=\phi$, otherwise there is an alternating path $P$ from $b_a$ to $g_b$ in $G(S)\setminus(b_a,g_b)$, and using $P$
with $(b_a,g_b)$, $x_{ab}$ may be increased and another allocation $X'\in\mathbb X(S)$ may be obtained, where
$x'_{ab}>x_{ab}$, which contradicts the maximality of $x_{ab}$. 

If the prices of goods in $J_1$ are decreased and the prices of goods in $J_2$ are increased, then clearly all the cycles in
$G(S)$ containing the edge $(b_a,g_b)$ break. Such a price change may be forced by increasing $s_{ab}$ infinitesimally. The
procedure first finds such a price change and then constructs an appropriate strategy profile.
Let $(p_1,\dots,p_n)=\bs{p}(S)ٍ,\ l=\sum_{g_j\in J_1}p_j$, $r=\sum_{g_j\in J_2}p_j$ and $\alpha$ be a variable.
The prices are changed as $p_j'=(1-\alpha)p_j$ for $g_j\in J_1$, $p_j'=(1+\frac{l\alpha}{r})p_j$ for $g_j\in J_2$ and
$p_j'=p_j$ for the remaining goods. When $\alpha$ is increased continuously starting from $0$, the corresponding changes in
prices may trigger any of the following three events:

{\bf Event 1:} A new edge may become tight from a buyer outside $J_1$ to a good in $J_1$ or from a
buyer in $J_2$ to a good outside $J_2$. 

{\bf Event 2:} To reflect the price change, the money has to be pulled out from the goods in $J_1$ and transferred to the
goods in $J_2$ through the edge $(b_a,g_b)$. This may cause a non-zero edge in $J_1$ or $J_2$ to become zero. 

{\bf Event 3:} Since the price of good $b$ as well as the allocation on the edge $(b_a,g_b)$ is increasing, the payoff of
buyer $a$ may decrease and become equal to $u_a(X)-\gamma$.

The procedure finds $\alpha>0$ such that none of the three events occur, and constructs a new strategy profile $S'$, where
$s'_{ij}=s_{ij}, \forall (i,j)\neq (a,b)$, and $s'_{ab}=s_{ab}\frac{(1+\frac{l\alpha}{r})}{(1-\alpha)}$. Clearly,
$\bs{p}(S')=(p'_1,\dots,p'_n)$, and $G(S')=(G[J_1]\cup G[J_2]\cup G[V(G)\setminus (J_1\cup J_2)])+(b_a,g_b)$, where 
$G[U]$ denotes the induced graph on $U\subseteq \mathcal B\cup \mathcal G$ in $G(S)$.

\section{Proofs}\label{Acorr}
\noindent{\bf Proof of Proposition \ref{symm}}\\

\noindent $(\Rightarrow)$ is straightforward (Theorem \ref{nece}). For the other direction, let $S=(\bs{s},\dots,\bs{s})$ be
a conflict-free symmetric strategy profile. Clearly, $G(S)$ is a complete bipartite graph and $\mathcal P_i(S)=w_i(S),\
\forall i \in \mathcal B$. If $S$ is not a NESP, then there is a buyer, say $k$, who may deviate and get a better payoff. Let
$S'=(\bs{s_1'},\dots,\bs{s_m'})$ be a strategy profile, where $\bs{s_i'}=\bs{s},\ \forall i\neq k$, such that $\mathcal
P_k(S')>\mathcal P_k(S)$. Let $X'\in X(S')$ be such that $u_k(X')=\mathcal P_k(S')$. 

Let $(p_1,\dots,p_n)=\bs{p}(S)$ and $(p_1',\dots,p_n')=\bs{p}(S')$. If $\bs{p}(S')=\bs{p}(S)$, {\it i.e.}, $p_j=p_j',\forall
j\in\mathcal G$, then $u_k(X')\le w_k(S)$. For the other case $\bs{p}(S')\ne \bs{p}(S)$, let $J_1=\{j\in\mathcal G\ |\
p'_j<p_j\}$, $J_2=\{j\in\mathcal G\ |\ p'_j=p_j\}$, and $J_3=\{j\in\mathcal G\ |\ p'_j>p_j\}$. Note that all buyers except
$k$ will have edges only to the goods in $J_1$ in $G(S')$, {\it i.e.}, goods whose prices have been decreased. From $X'$, we can
construct an allocation $X\in\mathbb X(S)$, such that $\forall j\in J_2\cup J_3,\ x_{kj}=x'_{kj}$ and $\forall j\in
J_1,\ x_{kj}>x'_{kj}$. Hence $u_k(X')<u_k(X)\le w_k(S)$, which is a contradiction. \qed

\vspace{0.5cm}
\noindent {\bf Explanation of Example \ref{eg5}}\\
\\
\noindent $S_2$ is clearly a NESP because $S_2$ is a symmetric strategy profile and there is a conflict-free allocation
(Proposition \ref{symm}). 

For $S_1$, it can be easily checked that buyers $1$ and $3$ have no deviating strategy. Buyer $2$ is essentially the price
setter. However, no matter whatever the strategy, buyer $2$ plays, buyer $1$ will buy only good $1$ and buyer $3$ will buy
only good $3$. Hence, let $h_2=\frac{4*x}{50+x}+\frac{9*(100-x)}{150-x}$ be the payoff of buyer $2$, when she gives $x$
amount of money to good $1$.  Note that $h_2$ is a {\em concave} curve. We compute the maximum value of $h_2$ w.r.t.
$x\in[0,100]$, which turns out to be $7.5$ for $x=30$. Hence buyer $2$ also has no deviating strategy at $S_1$.

\vspace{0.3cm}
\noindent{\bf Proof of Lemma \ref{order}}\\
\\
Let $S$ be a NESP, which does not have a conflict-free nice allocation. Consider a conflict-free allocation $X\in\mathbb
X(S)$. Then w.r.t.  $X$, there are goods $i$ and $j$ such that $x_{1i}>0$, $x_{2j}>0$ and $i>j$.

Since we started with a conflict-free allocation, $\frac{u_{1i}}{p_i}\ge \frac{u_{1j}}{p_{j}}$, $\frac{u_{2i}}{p_i}\le
\frac{u_{2j}}{p_{j}} \Rightarrow$ $\frac{u_{1j}}{u_{2j}} \le \frac{u_{1i}}{u_{2i}}$. Moreover, we assumed that
$\frac{u_{1j}}{u_{2j}} \ge \frac{u_{1i}}{u_{2i}}$, hence $\frac{u_{1i}}{p_i}=\frac{u_{1j}}{p_{j}}$ and
$\frac{u_{2j}} {p_j}=\frac{u_{2i}}{p_{i}}$. 
Hence, buyer $1$ may take away some money from good $i$ and spend it on good $j$ and buyer $2$ may take away some money from
good $j$ and spend it on good $i$ without affecting the payoffs. This gives another conflict-free allocation. We may repeat
this operation till we get a {\em nice allocation}. 

There is exactly one nice allocation in $\mathbb X(S)$, hence the uniqueness follows.\qed

\vspace{0.3cm}
\noindent{\bf Proof of Proposition \ref{prop_curve}}\\

\noindent Let $\mathbb F$, $\mathcal X$, $\mathbb H$, $t(\alpha)$, $\mathbb G$, $S^1$ and $S^2$ be as defined in Section \ref{payoff}. Let $T_k$ be the tree,
where buyer $1$ and $2$ are adjacent to goods $1,\dots,k$ and $k,\dots,n$ respectively, and $F_k$ be the forest, where buyer
$1$ and $2$ are adjacent to goods $1,\dots,k$ and ${k+1},\dots,n$ respectively. Let $F_0$ be the forest where buyer $1$ is
not adjacent to any good and buyer $2$ is adjacent to all the goods, and $F_n$ is defined similarly. Let $G(X)=(\mathcal
B,\mathcal G, E)$ be the bipartite graph, where the edge $(i,j)\in E$ iff $x_{ij}>0$. Let $t_i=\{(u_1(X),u_2(X))\ |\
G(X)=T_i,\ X\in{\mathcal X}\}$, and $f_i=\{(u_1(X),u_2(X))\ |\ G(X)=F_i,\ X\in{\mathcal X}\}$. 

\begin{claim}\label{conc}
The set $\mathbb H$ is a piece-wise linear concave curve, whose end points are $(0,1)$ and $(1,0)$.
\end{claim}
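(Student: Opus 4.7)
The plan is to show that every nice allocation decomposes by a single ``threshold'' good, so that $\mathbb H$ is exactly the concatenation of line segments joining the discrete points $F_0, F_1, \dots, F_n$ in payoff space, and then to read off concavity from the assumed ordering of goods.

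First I would establish the \emph{structural lemma}: for any nice allocation $X$, there exists $k \in \{0,1,\dots,n\}$ and $s \in [0,1]$ such that $x_{1j}=1$ for $j<k$, $x_{2j}=1$ for $j>k$, and $x_{1k}=s, x_{2k}=1-s$ (when $k\in\{1,\dots,n\}$; the boundary cases $k=0,n$ give $F_0$ and $F_n$). This follows by combining the niceness condition $x_{1i}>0, x_{2j}>0 \Rightarrow i\le j$ with market clearing $x_{1j}+x_{2j}=1$: if two distinct goods $j_1<j_2$ were both shared, niceness would force $j_2\le j_1$, a contradiction; hence at most one good is shared, and the non-shared goods split into a prefix taken by buyer~$1$ and a suffix taken by buyer~$2$. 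This is the main technical step, but it is short once the observation about ``at most one shared good'' is made.

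Next, for each $k \in \{1,\dots,n\}$, I would parametrize the subset of $\mathcal X$ whose graph is contained in $T_k$. As $s$ ranges over $[0,1]$, the payoff tuple is
\[
\bigl(u_1(X),u_2(X)\bigr) = \Bigl(\sum_{j<k}u_{1j} + s\,u_{1k},\ \sum_{j>k}u_{2j} + (1-s)\,u_{2k}\Bigr),
\]
which traces the line segment in $\mathbb R^2$ from $F_{k-1}$ (at $s=0$) to $F_k$ (at $s=1$). Hence $\mathbb H = \bigcup_{k=1}^{n}[F_{k-1},F_k]$ is piecewise linear with $n$ pieces, and its endpoints are $F_0=(0,\sum_j u_{2j})=(0,1)$ and $F_n=(\sum_j u_{1j},0)=(1,0)$ by the normalization $\sum_j u_{ij}=1$.

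Finally, I would compute the slope of the $k$-th segment: $\Delta u_1 = u_{1k}$ and $\Delta u_2 = -u_{2k}$, giving slope $-u_{2k}/u_{1k}$. By the global assumption $u_{1j}/u_{2j} \ge u_{1,j+1}/u_{2,j+1}$, the ratios $u_{2k}/u_{1k}$ are non-decreasing in $k$, so the slopes $-u_{2k}/u_{1k}$ are non-increasing as we move from $F_0$ toward $F_n$ (i.e.\ as $u_1$ increases). Viewing $u_2$ as a function of $u_1$ along $\mathbb H$, the slopes decrease weakly, which is exactly concavity of a piecewise linear curve. Combined with the endpoint computation, this proves the claim. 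The main obstacle is the structural lemma in Step~1; once that is in hand, the rest is a direct calculation. \qed
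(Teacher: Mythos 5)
Your proposal is correct and follows essentially the same route as the paper: decomposing $\mathbb H$ into the segments $t_k$ (one shared good) joining the points $f_k$ (no shared good), computing the slope $-u_{2k}/u_{1k}$ of each segment, and deducing concavity from the ordering $u_{1j}/u_{2j}\ge u_{1(j+1)}/u_{2(j+1)}$ and the endpoints from the normalization $\sum_j u_{ij}=1$. The only difference is that you spell out explicitly the structural step (at most one shared good in a nice allocation) that the paper leaves as an implicit observation, which is a welcome addition rather than a deviation.
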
 
\begin{proof}
The proof is based on the following observations:
\begin{itemize}
\item $\mathbb H=\displaystyle\bigcup_{k=1}^n t_i \cup \bigcup_{k=0}^{n} f_i$.
\item $f_i$ is a point, for all $0\le i\le n$, and $f_0=(0,1), f_n=(1,0)$.
\item $t_i$ is a straight line with slope $-\frac{u_{2i}}{u_{1i}}$, and the limit of the end points of $t_i$ are $f_{i-1}$
and $f_{i}$ for all $1\le i\le n$.
\item Since $\frac{u_{1i}}{u_{2i}}\ge \frac{u_{1(i+1)}}{u_{2(i+1)}}$, for all $i<n$, hence slope of $t_i$ decreases as $i$
goes from $1$ to $n$. \qed
\end{itemize} 
\end{proof} 

\begin{claim}\label{line1}
$t(\alpha)\in S^{NE},\forall\alpha\in[0,1]$.
\end{claim}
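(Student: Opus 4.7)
The plan is to show, for each $\alpha \in [0,1]$, that the single strategy vector $\bs{s} = (s_1, \dots, s_n)$ with $s_i = u_{1i} + \alpha(u_{2i} - u_{1i})$ belongs to $\mathbb{B}$ (so that $(\bs{s},\bs{s}) \in S^{NE}$ by Lemma \ref{2buyerNESP}). First I would pin down the right index $k$: since $\sum_j u_{1j} = \sum_j u_{2j} = 1$, we have $\sum_j s_j = 1 = m_1 + m_2$, so the partial sums $\sigma_k := \sum_{i=1}^k s_i$ rise monotonically from $0$ to $1$. Let $k$ be the smallest index with $\sigma_k \ge m_1$. If $\sigma_k > m_1$ strictly, the two strict budget inequalities defining $B_k$ hold; if $\sigma_k = m_1$, the two equalities defining $B_k'$ hold. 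The remaining work is to verify the tight-edge inequalities.

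Next I would verify the two families of inequalities. The key identity is a one-line computation exploiting the convex-combination form of $s_i$: for any indices $i < j$,
\[
u_{1i}s_j - u_{1j}s_i \;=\; \alpha\bigl(u_{1i}u_{2j} - u_{1j}u_{2i}\bigr) \;=\; \alpha\, u_{2i}u_{2j}\!\left(\tfrac{u_{1i}}{u_{2i}} - \tfrac{u_{1j}}{u_{2j}}\right) \;\ge\; 0,
\]
and
\[
u_{2i}s_j - u_{2j}s_i \;=\; (1-\alpha)\bigl(u_{1j}u_{2i} - u_{1i}u_{2j}\bigr) \;=\; (1-\alpha)\, u_{2i}u_{2j}\!\left(\tfrac{u_{1j}}{u_{2j}} - \tfrac{u_{1i}}{u_{2i}}\right) \;\le\; 0,
\]
both using the assumed ordering $\frac{u_{1i}}{u_{2i}} \ge \frac{u_{1j}}{u_{2j}}$ whenever $i < j$. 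Re-indexing with $i \le k$ and $j \ge k$ (or $j \ge k+1$) gives $u_{1j}s_i - u_{1i}s_j \le 0$ and $u_{2i}s_j - u_{2j}s_i \le 0$ directly, matching the inequality rows in Tables \ref{NE1} and \ref{NE2}. The boundary diagonal case $i = j = k$ in $B_k$ is trivial since both sides vanish.

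Combining the two steps: $\bs{s} \in B_k$ when $\sigma_{k-1} < m_1 < \sigma_k$ and $\bs{s} \in B_k'$ when $\sigma_k = m_1$. In either case $\bs{s} \in \mathbb{B}$, so $t(\alpha) = (\bs{s},\bs{s}) \in S^{NE}$. The main (only) obstacle is really bookkeeping — making sure the strict/equality dichotomy on $\sigma_k$ is handled cleanly so we land in $B_k$ or $B_k'$, and that the endpoint cases $\alpha = 0$ (where the second identity becomes an equality) and $\alpha = 1$ (where the first does) are still covered since the inequalities remain $\ge 0$ and $\le 0$ respectively.
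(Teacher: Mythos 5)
Your proposal is correct and follows essentially the same route as the paper: the heart of both arguments is that with prices equal to the convex combination $s_i=(1-\alpha)u_{1i}+\alpha u_{2i}$, the ordering $\frac{u_{1i}}{u_{2i}}\ge\frac{u_{1j}}{u_{2j}}$ for $i<j$ yields exactly the cross inequalities $u_{1j}s_i\le u_{1i}s_j$ and $u_{2i}s_j\le u_{2j}s_i$ (equivalently, the monotonicity of $\frac{u_{1i}}{s_i}$ and $\frac{u_{2i}}{s_i}$ used in the paper), which are the defining constraints of $B_k$ and $B_k'$. You merely make explicit, via the partial sums $\sigma_k$ and the strict/equality dichotomy, the budget-split bookkeeping that the paper leaves implicit in its appeal to the existence of a conflict-free nice allocation and Lemma \ref{2buyerNESP}; this is a harmless and in fact slightly more self-contained rendering of the same argument.
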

\begin{proof}
The equilibrium prices w.r.t. $S=t(\alpha)$ are $s_1,\dots,s_n$. Since $\frac{u_{1i}}{u_{2i}} \ge
\frac{u_{1(i+1)}}{u_{2(i+1)}} \Rightarrow \frac{u_{1i}}{s_i} \ge \frac{u_{1(i+1)}}{s_{i+1}},\forall i< n$. We can also view
$s_i=u_{2i}+(1-\alpha)(u_{1i}-u_{2i})$, $\forall i \le n$, and hence $\frac{u_{2i}}{s_i} \le
\frac{u_{2(i+1)}}{s_{i+1}},\forall i<n$. It implies that there exists a conflict-free nice allocation w.r.t. $S$, hence $S\in
S^{NE}$ (Lemma \ref{2buyerNESP}). \qed
\end{proof}

Clearly, $\mathbb G\subset \mathbb H$ and is PLC curve with the end points $(w_1(S^1),w_2(S^1))$ and $(w_1(S^2),w_2(S^2))$. Let $X$ be an
allocation, and $X_1=[x_{1j}],X_2=[x_{2j}]$ be the restrictions of $X$ to buyers $1$ and $2$ respectively. For a NESP $S$,
let $X(S)$ be the (conflict-free) nice allocation. Let $X_1(S)$ and $X_2(S)$ be the restrictions of $X(S)$ to buyers $1$ and
$2$ respectively. The following lemma proves that $\mathbb F$ equals $\mathbb G$ as sets. As a preparation towards this
lemma, we introduce a notion of {\em buyer $i$ getting more goods according to the allocation $X$ than the allocation $X'$},
by which we mean $X'_i\ge X_i$, {\it i.e.}, $x_{ij} \ge x'_{ij}, \forall{j \in \mathcal G}$, and $X'_i\neq X_i$.

\begin{claim}\label{curve_f}
As sets, $\mathbb F=\mathbb G$, {\it i.e.}, if $S\in S^{NE}$ then $(w_1(S),w_2(S)) \in \mathbb G$.
\end{claim}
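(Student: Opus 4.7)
The plan is to construct, for each NESP $S$, an $\alpha^*\in[0,1]$ such that $t(\alpha^*)$ yields the same payoff as $S$; combined with $\mathbb{G}\subseteq\mathbb{F}$ (Claim~\ref{line1}) this proves $\mathbb{F}=\mathbb{G}$.

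Let $S=(\bs{s},\bs{s})$ be a NESP. By Lemma~\ref{order} its unique conflict-free nice allocation is described by a cut index $k$ and a sharing ratio $x\in[0,1]$ of good~$k$. Under the normalization $\sum_j u_{ij}=1$, set $A:=w_1(S)=\sum_{j<k} u_{1j}+x\,u_{1k}$ and $B:=1-w_2(S)=\sum_{j<k} u_{2j}+x\,u_{2k}$. Writing the prices of $t(\alpha)$ as $s_j(\alpha)=(1-\alpha)u_{1j}+\alpha u_{2j}$, a direct expansion produces the identity
\[ \sum_{j<k} s_j(\alpha)+x\,s_k(\alpha)=(1-\alpha)A+\alpha B. \]
Define $\alpha^*:=(m_1-A)/(B-A)$ (the degenerate case $A=B$ reduces to $\bs{u_1}=\bs{u_2}$, a trivial market, and is handled by continuity). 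By construction buyer~1's budget at $t(\alpha^*)$ is exhausted exactly by taking all of goods $1,\ldots,k-1$ plus fraction $x$ of good~$k$. Since the ordering assumption on $u_{1j}/u_{2j}$ makes $u_{1j}/s_j(\alpha^*)$ non-increasing in $j$, this greedy allocation is buyer~1's max-payoff (and hence nice) allocation at $t(\alpha^*)$; buyer~2 is handled symmetrically. Thus the nice allocations at $t(\alpha^*)$ and $S$ coincide, producing identical payoffs.

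The one non-trivial step is verifying $\alpha^*\in[0,1]$, which reduces to the bounds $w_1(S)\ge m_1$ and $w_2(S)\ge m_2$. These follow from a single uniform-buying observation: the solution graph of any symmetric NESP is complete bipartite, so the allocation $x_{ij}:=m_i$ (for all $i,j$) lies in $\mathbb{X}(S)$ and gives buyer~$i$ the payoff $m_i\cdot\sum_j u_{ij}=m_i$; hence $w_i(S)\ge m_i$. With these inequalities, $m_1-A\le 0$ and $B-A=1-w_1(S)-w_2(S)\le 0$ together yield $\alpha^*\ge 0$, and $w_2(S)\ge m_2$ gives $B\le m_1$, so $m_1-A\ge B-A$ (both sides non-positive), giving $\alpha^*\le 1$.
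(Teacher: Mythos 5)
Your proof is correct, and it takes a genuinely different route from the paper's. The paper argues by sandwiching: it shows that for every NESP $S$ the nice allocation (and hence the payoff) lies between those of the two imitation profiles $S^1=t(0)$ and $S^2=t(1)$ --- the inequality $w_1(S)\ge w_1(S^1)$ is proved by contradiction using $\bs{p}(S^1)=\bs{u_1}$ and the nice-allocation property --- and then relies on the (asserted, not proved) continuity of the nice allocation along $t(\alpha)$ so that $\mathbb G$ sweeps out exactly the stretch of $\mathbb H$ between those endpoint payoffs. You instead construct the witness explicitly: reading off the cut $(k,x)$ of the unique conflict-free nice allocation of $S$, the linearity of $s_j(\alpha)$ in $\alpha$ gives your interpolation identity, and $\alpha^*=(m_1-A)/(B-A)$ makes buyer $1$'s greedy purchase at $t(\alpha^*)$ (optimal by the ratio ordering and the fact that all edges are tight at a symmetric profile, so $\mathbb X(t(\alpha^*))$ is cut out only by budgets) terminate at exactly the same cut, so both buyers attain their maxima at that one allocation and the payoffs coincide; the auxiliary bound $w_i(S)\ge m_i$, obtained from the uniform allocation $x_{ij}=m_i$ in the complete bipartite solution graph with $p_j=s_j$, is what pins $\alpha^*$ to $[0,1]$ and plays the role of the paper's endpoint comparison. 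Your route buys an explicit map from any NESP payoff to a point of $\mathbb G$ and avoids the continuity step; the paper's route buys the monotonicity picture (payoffs ordered exactly as nice allocations between $t(0)$ and $t(1)$). One assertion you leave unproved, though it is true: $A=B$ does force $\bs{u_1}=\bs{u_2}$, but this needs the observation that under the ordering of the ratios $u_{1j}/u_{2j}$ the prefix sums $\sum_{j\le t}(u_{1j}-u_{2j})$ are nonnegative and, with strictly positive utilities and both buyers holding money, can vanish at an intermediate cut only when the utility vectors coincide; in that degenerate case every NESP gives payoff $(m_1,m_2)$ and $\mathbb F=\mathbb G=\{(m_1,m_2)\}$ holds directly, rather than ``by continuity'' (indeed $t(\alpha)$ is then constant in $\alpha$).
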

\begin{proof}
Buyer $1$ gets more goods according to the {\em nice allocation} w.r.t. $S \in S^{NE}$ than the {\em nice allocation} w.r.t.
$S' \in S^{NE}$, {\it i.e.}, $X_1(S)\ge X_1(S')$ and $X_2(S)\le X_2(S')\Leftrightarrow w_1(S) > w_1(S')$ and $w_2(S) < w_2(S')$.
Hence, to show that w.r.t. $S\in S^{NE}$, $w_1(S^1)\le w_1(S)\le w_1(S^2)$ and $w_2(S^2)\le w_2(S)\le w_2(S^1)$, it is enough
to show that $X_1(S^1)\le X_1(S) \le X_1(S^2)$ and $X_2(S^2)\le X_2(S) \le X_2(S^1)$.

Suppose there exists a NESP $S$ such that $w_1(S)<w_1(S^1)$. Clearly, $\bs{p}(S^1)=(u_{11},\dots,u_{1n})$.
Let $(p_1,\dots,p_n)=\bs{p}(S)$.
Consider the {\it nice allocation} w.r.t. $S$, {\it i.e.}, $X(S)$. Since $w_1(S)<w_1(S^1)\Rightarrow X_1(S)<X_1(S^1)$.
This implies price of at least one good, say $a$, allocated to buyer $1$ w.r.t. $X_1(S)$ is more than $u_{1a}$, {\it i.e.},
$p_a>u_{1a}$. Similarly, price of at least one good, say $b$, allocated to buyer $2$ w.r.t. $X_2(S)$ is less than $u_{1b}$,
{\it i.e.}, $p_b<u_{1b}$.

Since $\frac{u_{1a}}{p_a}<\frac{u_{1b}}{p_b}$, buyer $1$ prefers good $b$, which is allocated to buyer $2$, over good $a$. It
is a contradiction, since it violates the property of nice allocation. 
Furthermore, $X_1(S)\le X_1(S^2)$ may follow from the similar argument as above. It implies that $X(S)= X(t(\alpha))$ for
some $\alpha\in[0,1]$. \qed
\end{proof}

\vspace{0.3cm}
\noindent{\bf Proof of Lemma \ref{underg}}\\
\\
\noindent If $X$ is a nice allocation, then clearly $(u_1(X),u_2(X))\in \mathbb H$. Otherwise, there exist two goods $i$ and $j$ such
that $i<j$, $x_{1i}<1$ and $x_{2j}<1$ and $i$ is the smallest such number and $j$ is the largest such number. The following
equations show that there exist $z$ and $w$, such that buyer $1$ can exchange $w$ amount of good $j$ with $z$ amount of good
$i$ with buyer $2$ and payoffs of both the buyers are not decreased: 

\begin{center}
$\frac{u_{1j}}{u_{2j}}\leq \frac{u_{1i}}{u_{2i}}\ \Rightarrow\ \frac{u_{2i}}{u_{2j}}\le\frac{w}{z}\leq \frac{u_{1i}}{u_{1j}}
\ \Rightarrow\ zu_{1i}-wu_{1j}\geq{0}$ and $-zu_{2i}+wu_{2j}\geq{0}$ 
\end{center} 

Therefore, they exchange goods $i$ and $j$ in a ratio $z$ and $w$ satisfying above equations till either buyer $1$ gets the
entire good $i$ or buyer $2$ gets the entire good $j$. If current allocation is still not nice, then by repeating the above
exchange procedure, we are guaranteed to reach at a nice allocation, whose payoff $(x_1,x_2)\in \mathbb H$ is such that
$x_1\geq{u_1(X)}$ and $x_2\geq{u_2(X)}$. \qed
%
%
\end{appendix}

\end{document}